\theoremstyle{definition}
\newtheorem{defn}{Definition}[section]
\newtheorem{thm}[defn]{Theorem}
\newtheorem{tvr}[defn]{Proposition}
\theoremstyle{remark}
\newtheorem{example}{Example}[section]
\newcommand{\id}{\mathfrak{1}}
\newlength{\defbaselineskip}
\newcommand{\setlinespacing}[1]%
           {\setlength{\baselineskip}{#1 \defbaselineskip}}
\newcommand{\hart}{\zeta}
\renewcommand{\i}{\mathrm{i}}
\newcommand{\map}{\rightarrow}
\newcommand{\q}{\quad}
\renewcommand{\epsilon}{\varepsilon}
\newcommand{\ep}{\varepsilon}
\newcommand{\la}{\lambda}
\newcommand{\al}{\alpha}
\newcommand{\om}{\omega}
\renewcommand{\rho}{\varrho}
\renewcommand{\phi}{\varphi}
\newcommand{\R}{{\mathbb{R}}}
\newcommand{\N}{{\mathbb N}}
\newcommand{\Z}{\mathbb{Z}}
\newcommand{\C}{\mathbb{C}}
\newcommand{\set}[2]{\left\{#1 \, |\, #2 \right\}}
\newcommand{\abs}[1]{\left\vert#1\right\vert}
\newcommand{\wt}{\widetilde}
\newcommand{\sca}[2]{\langle #1,\, #2\rangle}
\begin{document}

\title[Cosine transforms generalized to honeycomb lattice]
{Discrete cosine and sine transforms generalized to honeycomb lattice}

\author[J. Hrivn\'{a}k]{Ji\v{r}\'{i} Hrivn\'{a}k$^{1}$}
\author[L. Motlochov\'{a}]{Lenka Motlochov\'{a}$^{1}$}

\date{\today}
%%%%%%%%%%%%%%%%%%%%%%%%%%%%%%%%%%%%%%%%%%%%%%%%%%%%%%%%%%%
\begin{abstract}\small The discrete cosine and sine transforms are generalized to a triangular fragment of the honeycomb lattice. The honeycomb point sets are constructed by subtracting the root lattice from the weight lattice points of the crystallographic root system $A_2$. 
The two-variable orbit functions of the Weyl group of $A_2$, discretized simultaneously on the weight and root lattices, induce a novel parametric family of extended Weyl orbit functions. The periodicity and von Neumann and Dirichlet boundary properties of the extended Weyl orbit functions are detailed. Three types of discrete complex Fourier-Weyl transforms and real-valued Hartley-Weyl transforms are described. Unitary transform matrices and interpolating behaviour of the discrete transforms are exemplified. Consequences of the developed discrete transforms for transversal eigenvibrations of the mechanical graphene model are discussed. 
\end{abstract}

\maketitle
\noindent
$^1$ Department of Physics, Faculty of Nuclear Sciences and Physical Engineering, Czech Technical University in Prague, B\v{r}ehov\'a~7, CZ--115 19 Prague, Czech Republic
\vspace{10pt}

\noindent
\textit{E-mail:} jiri.hrivnak@fjfi.cvut.cz, lenka.motlochova@fjfi.cvut.cz

\bigskip

\noindent
Keywords: Weyl orbit functions, honeycomb lattice, discrete cosine transform, Hartley transform
\bigskip

\noindent
MSC: 33C52, 65T50, 20F55

%\tableofcontents
%%%%%%%%%%%%%%%%%%%%%%%%%%%%%%%%%%%%%%%%%%%%%%%%%%%%%%%%%%%%%%%
%\setlinespacing{0.98}
\section{Introduction}
%%%%%%%%%%%%%%%%%%%%%%%%%%%%%%%%%%%%%%%%%%%%%%%%%%%%%%%%%%%%%%
The purpose of this article is to generalize the discrete cosine and sine transforms \cite{Brit,strang} to a finite fragment of the honeycomb lattice. Two-variable (anti)symmetric complex-valued Weyl and the real-valued Hartley-Weyl orbit functions \cite{KP1,KP2,HJedis,HMdis} are modified by six extension coefficients sequences and the resulting family of the extended functions contains the set of the discretely orthogonal honeycomb orbit functions.  
The triangularly shaped fragment of the honeycomb lattice forms the set of nodes of the honeycomb Fourier-Weyl and Hartley-Weyl discrete transforms. 

The one-variable discrete cosine and sine transforms and their multivariate concatenations constitute the backbone of the digital data processing methods \cite{Brit, strang}. The periodicity and (anti)symmetry of the cosine and sine functions represent intrinsic symmetry properties inherited from the (anti)symmetrized one-variable exponential functions. These symmetry characteristics, essential for data processing applications, restrict the trigonometric functions to a bounded interval and induce boundary behaviour of these functions at the interval endpoints. The boundary behaviour of the most ubiquitous two-dimensional cosine and sine transforms is directly induced by the boundary features of their one-dimensional versions \cite{Brit}.  Two-variable symmetric and antisymmetric orbit functions of the crystallographic reflection group $A_2$, confined by their symmetries to their fundamental domain of an equilateral triangle shape, satisfy similar Dirichlet and von Neumann boundary conditions \cite{KP1,KP2}. These fundamental boundary properties are preserved by the novel parametric sets of the extended Weyl and Hartley-Weyl orbit functions. Narrowing the classes of the extended Weyl and Hartley-Weyl orbit functions by three non-linear conditions, continua of parametric sets of the discretely orthogonal honeycomb orbit functions and the corresponding discrete transforms are found.    

Discrete transforms of the Weyl orbit functions over finite sets of points and their applications are, of recent, intensively studied \cite{HP,HMPdis,HMPcub,HW1,HW2,xuAd}. The majority of these methods arise in connection with point sets taken as the fragments of the refined dual-weight lattice. Discrete orthogonality and transforms of the Weyl orbit functions over a fragment of the refined weight lattice are formulated in connection with the conformal field theory in \cite{HW2}. Discrete orthogonality and transforms of the Hartley-Weyl orbit functions over a fragment of the refined dual weight lattice are obtained in a more general theoretical setting in \cite{HJedis} and explicit formulations of discrete orthogonality and transforms of both Weyl and Hartley-Weyl functions over a fragment of the refined dual root lattice are achieved in \cite{HMdis}. For the crystallographic root system $A_2$, the dual weight and weight lattices as well as the dual root and root lattices coincide. Since the honeycomb lattice is not in fact a lattice in a strict mathematical sense, its points in the context of the root system $A_2$ are constructed by subtracting the root lattice from the weight lattice. Recent explicit formulations of the discrete orthogonality relations of the Weyl and Hartley-Weyl orbit functions over both root and weight lattices of $A_2$ permit construction of the discrete honeycomb functions and transforms. The introduced extended Weyl and Hartley-Weyl orbit functions, obeying the three non-linear conditions, represent a unique novel discretely orthogonal parametric systems of functions over the subtractively constructed honeycomb point sets. The counting formulas for the numbers of elements in the point and label sets in \cite{HJedis} and \cite{HMdis} guarantee existence of both Weyl and Hartley-Weyl discrete Fourier transforms over the honeycomb point sets.
  
The potential of the developed Fourier-like discrete honeycomb transforms lies in the data processing methods on the triangular fragment of the honeycomb lattice as well as in theoretical description of the properties of the graphene material \cite{Castro,Cooper}. Both transversal and longitudinal vibration modes of the graphene are regularly studied assuming Born--von K\'{a}rm\'{a}n periodic boundary condidions \cite{Castro,CsTi,DrSa,Falkovsky,Sahoo}. Analysis of wave functions of the electron in a triangular graphene quantum dot \cite{Roz} leads to discrete functions which obey Dirichlet conditions on the armchair-type boundary. Special cases of the presented honeycomb orbit functions represent analogous vibration modes that satisfy either Dirichlet or von Neumann conditions on the boundary of the same type. Since spectral analysis provided by the developed discrete honeycomb transforms enjoys similar boundary properties as the 2D discrete cosine and sine transforms, the output analysis of the graphene-based sensors \cite{Shiva,Jamlos} by the honeycomb transforms offers similar data processing potential. The honeycomb discrete transforms provide novel possibilities for study of distribution of spectral coefficients \cite{Lam}, image watermarking \cite{Hernandez}, encryption \cite{Liu}, and compression \cite{Fracastoro} techniques.   

The paper is organized as follows. In Section 2, a self-standing review of $A_2$ inherent lattices and extensions of  the corresponding crystallographic reflection group is included. In Section 3, the fundamental finite fragments of the honeycomb lattice and the weight lattice are described. In Section 4, two types of discrete orthogonality relations of the Weyl and Hartley-Weyl orbit functions are recalled. In Section 5, definitions of the extended and honeycomb orbit functions together with the proof of their discrete orthogonality are detailed. In Section 6, three types of the honeycomb orbit functions are exemplified and depicted. In Section 7, the four types of the discrete honeycomb lattice transforms and unitary matrices of the normalized transforms are formulated. Comments and follow-up questions are covered in the last section.

\section{Infinite extensions of Weyl group}

\subsection{Root and weight lattices}\

Notation, terminology, and pertinent facts about the simple Lie algebra $A_2$ stem from the theory of Lie algebras and crystallographic root systems \cite{H2,Bour}. The $\al-$basis of the $2-$dimensional Euclidean space $\R^2$, with its scalar product denoted by $\sca{\,}{\,}$, comprises vectors $\alpha_1,\alpha_2$ characterized by their lenghts and relative angle as 
\begin{equation*}
\sca{\al_1}{\al_1}=\sca{\al_2}{\al_2}=2,\q\sca{\al_1}{\al_2}=-1.
\end{equation*}
In the context of simple Lie algebras and root systems, the vectors $\alpha_1$ and $\alpha_2$ form the simple roots of~$A_2$. 
In addition to the $\alpha-$basis of the simple roots, it is convenient to introduce $\omega-$basis of $\R^2$ of vectors $\omega_1$ and $\omega_2$, called the fundamental weights, satisfying
\begin{equation}\label{Zdual}
\left\langle \om_i,\alpha_j\right\rangle=\delta_{ij}\,,\quad i,j\in\{1,2\}. 	
\end{equation}

The vectors $\om_1$ and $\om_2$, written in the $\alpha-$basis, are given as
\begin{equation*}
\omega_1=\tfrac23\al_1+\tfrac13\al_2,\q \omega_2=\tfrac13\al_1+\tfrac23\al_2,
\end{equation*}
and the inverse transform is of the form
\begin{equation*}
\al_1=2\om_1-\om_2,\q\al_2=-\om_1+2\om_2.
\end{equation*}
The lenghts and relative angle of the vectors $\om_1$ and $\om_2$ are determined by
\begin{equation}\label{scaom}
\sca{\om_1}{\om_1}=\sca{\om_2}{\om_2}=\tfrac23,\q\sca{\om_1}{\om_2}=\tfrac13,	
\end{equation}
and the scalar product of two vectors in $\om-$basis $x=x_1\omega_1+x_2\omega_2$ and $y=y_1\omega_1+y_2\omega_2$ is derived as  
\begin{equation}\label{scprod}
\sca{x}{y}=\tfrac13 (2 x_1y_1 + x_1y_2 + x_2y_1 + 2 x_2y_2).
\end{equation}

The lattice $Q\subset\R^2$, referred to as the root lattice, comprises all integer linear combinations of the $\al-$basis,
$$Q = \Z \alpha_1 + \Z \alpha_2.$$
The lattice $P\subset\R^2$, called the weight lattice, consists of all integer linear combinations of the $\om-$basis,
$$P = \Z \om_1 + \Z \om_2.$$
The root lattice $P$ is disjointly decomposed into three shifted copies of the root lattice $Q$ as
\begin{equation}\label{Pdec}
P=Q\cup \{\om_1+Q\} \cup \{\om_2+Q\}.	
\end{equation}

The reflections $r_i$, $i=1,2$, which fix the hyperplanes orthogonal to $\alpha_i$ and pass through the origin are linear maps expressed for any $x\in\R^2$ as
\begin{equation}\label{ri}
	r_ix=x-\langle \alpha_i,x\rangle\alpha_i.
\end{equation}
The associated Weyl group $W$ of $A_2$ is a finite group generated by the reflections $r_1$ and $r_2$. The Weyl group orbit $Wx$, constituted by $W-$images of the point $x\equiv(x_1,x_2)=x_1\om_1+x_2\om_2$, is given in $\om-$basis as
\begin{equation}\label{orbit}
Wx=\{(x_1,x_2),(-x_1, x_1 + x_2), (-x_1 - x_2, x_1), (-x_2, -x_1), (x_2, -x_1 - x_2), (x_1 + x_2, -x_2)\}.
\end{equation}
The root lattice $Q$ and the weight lattice $P$ are Weyl group invariant, 
$$ WP=P,\q WQ=Q.$$ 

\subsection{Affine Weyl group}\

The affine Weyl group of $A_2$ extends the Weyl group $W$ by shifts by vectors from the root lattice $Q$,
$$W_Q^{\mathrm{aff}}=  Q \rtimes W.$$ 
Any element $T(q)w\in W_Q^{\mathrm{aff}}$ acts on any $x\in\R^2$ as $$T(q)w\cdot x= wx+q.$$
The fundamental domain $F_Q$ of the action of $W_Q^{\mathrm{aff}}$ on $\R^2$, which consists of exactly one point from each $W_Q^{\mathrm{aff}}-$orbit, is a triangle with vertices $\left\{ 0, \om_1,\om_2 \right\}$,
\begin{align}\label{F}
F_Q&=\left\lbrace x_1\om_1+x_2\om_2\, |\, x_1,x_2\ge 0, x_1+x_2\le 1\right\rbrace .
\end{align}

For any $M\in\N$, the point sets $F_{P,M}$ and $F_{Q,M}$ are defined as finite fragments of the refined lattices $\tfrac{1}{M} P$ and $\tfrac{1}{M} Q$ contained in $F_Q$,
\begin{align}
	F_{P,M}&=\tfrac{1}{M} P\cap F_Q, \label{FPM}\\
	F_{Q,M}&=\tfrac{1}{M} Q\cap F_Q.\label{FQM}
\end{align}
The point sets $F_{P,M}$ and $F_{Q,M}$ are of the following explicit form,
\begin{equation}\label{gridF}
\begin{aligned}
F_{P,M}&=\set{\tfrac{s_1}{M}\omega_1+\tfrac{s_2}{M}\omega_2}{s_0,s_1,s_2\in\Z^{\ge0},s_0+s_1+s_2=M},\\
F_{Q,M}&=\set{\tfrac{s_1}{M}\omega_1+\tfrac{s_2}{M}\omega_2}{s_0,s_1,s_2\in\Z^{\ge0},s_0+s_1+s_2=M,s_1+2s_2=0\;\mathrm{mod}\,3}.
\end{aligned}
\end{equation}
Note that the points from $F_{P,M}$ are described by the coordinates from \eqref{gridF} as
\begin{equation}\label{kac}
s=[s_0,s_1,s_2]\in F_{P,M}	
\end{equation}
and the set $F_{Q,M}\subset F_{P,M}$ contains only such points from $F_{P,M}$, which satisfy the additional condition $s_1+2s_2=0\;\mathrm{mod}\,3$. The number of points in the point sets $F_{P,M}$ and $F_{Q,M}$ are calculated in \cite{HP,HMdis} as
\begin{align}\label{numb1}
\abs{F_{P,M}}&=\tfrac12(M^2+3M+2),\\
\abs{F_{Q,M}}&=\begin{cases}
\tfrac16(M^2+3M+6)&M=0\;\mathrm{mod}\, 3,\\
\tfrac16(M^2+3M+2)&\text{otherwise}.\label{numb2}
\end{cases}
\end{align}

Interiors of the point sets $F_{P,M}$ and $F_{Q,M}$ contain the grid points from the interior of $F_Q$,
\begin{align}
	\wt{F}_{P,M} &=\tfrac{1}{M} P\cap\mathrm{int}(F_Q),  \label{FPMint}\\
\wt{F}_{Q,M} &=\tfrac{1}{M} Q\cap\mathrm{int}(F_Q).\label{FQMint}
\end{align}
The explicit forms of the point sets interiors $\wt F_{P,M}$ and $\wt F_{Q,M}$ are the following,
\begin{equation}\label{gridintF}
\begin{aligned}
\wt{F}_{P,M}&=\set{\tfrac{s_1}{M}\omega_1+\tfrac{s_2}{M}\omega_2}{s_0,s_1,s_2\in\N,s_0+s_1+s_2=M},\\
\wt{F}_{Q,M}&=\set{\tfrac{s_1}{M}\omega_1+\tfrac{s_2}{M}\omega_2}{s_0,s_1,s_2\in\N,s_0+s_1+s_2=M,s_1+2s_2=0\;\mathrm{mod}\,3},
\end{aligned}
\end{equation}
and the counting formulas from \cite{HP,HMdis} calculate the number of points for $M>3$ as
\begin{align}
\abs{\wt{F}_{P,M}}&=\tfrac12(M^2-3M+2), \label{numb3}\\
\abs{\wt{F}_{Q,M}}&=\begin{cases}
\tfrac16(M^2-3M+6)&M=0\;\mathrm{mod}\, 3,\\
\tfrac16(M^2-3M+2)&\text{otherwise}.
\end{cases}\label{numb4}
\end{align}
The point sets $F_{P,M}$ and $F_{Q,M}$ are for $M=7$ depicted in Figure \ref{obrbody}.
\begin{figure}
\includegraphics{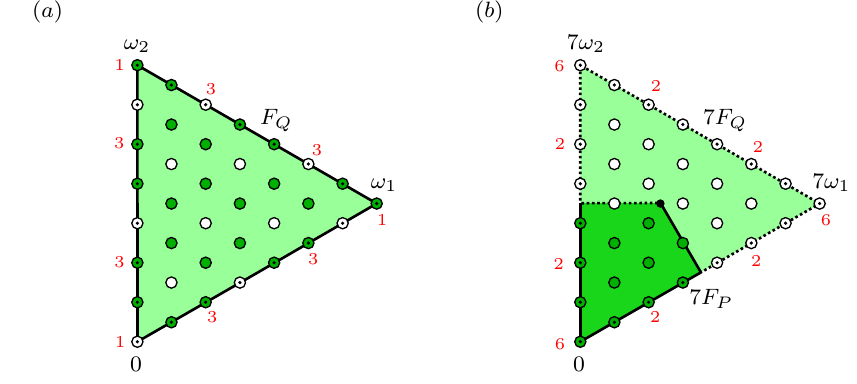}
\caption{\small $(a)$ The fundamental domain $F_Q$ is depicted as the green triangle containing 36 dark green and white nodes that represent the points of the set $F_{P,7}$. The elements of $F_{Q,7}$ are displayed as 12 white nodes. Omitting the dotted boundary points from $F_{P,7}$ and $F_{Q,7}$ yields 15 points of the interior set $\wt{F}_{P,7}$ and the 5 points of the set $\wt{F}_{Q,7}$, respectively. The numbers $1,3$ assigned to the nodes illustrate the values of the discrete function $\ep(s)$. $(b)$ The dark green kite-shaped domain $7F_P$ is contained in the lighter green triangle, which depicts the domain $7F_Q$. The white and dark green nodes represent 36 weights of the weight set $7\Lambda_{Q,7}$, the dark green nodes represent 12 weights of the set $\Lambda_{P,7}$. Omitting the dotted boundary points from $\Lambda_{Q,7}$ and $\Lambda_{P,7}$ yields 15 points of the interior set $\wt{\Lambda}_{Q,7}$ and the 5 points of the set $\wt{\Lambda}_{P,7}$, respectively. The numbers $2,6$ assigned to the nodes illustrate the values of the discrete function $h_7(\lambda)$.}
\label{obrbody}
\end{figure}
A discrete function $\ep:F_{P,M}\map \N$ is defined by its values on coordinates \eqref{kac}  of $s\in F_{P,M}$ in Table~\ref{eps}.
\bgroup
\def\arraystretch{1.3}
\begin{table}
\begin{tabular}{|c||c|c|c|c|c|c|c|}
\hline
$ $&$[s_0,s_1,s_2]$&$[0,s_1,s_2]$&$[s_0,0,s_2]$&$[s_0,s_1,0]$&$[0,0,s_2]$&$[0,s_1,0]$&$[s_0,0,0]$\\ \hline\hline
$\ep\left(s\right)$&6&3&3&3&1&1&1\\\hline
\end{tabular}
\bigskip
\caption{{\small The values of the function $\ep$ on coordinates \eqref{kac} of $s\in F_{P,M}$ with $s_0,s_1,s_2\neq 0.$}}
\label{eps}
\end{table}
\egroup

\subsection{Extended affine Weyl group}\

The extended affine Weyl group of $A_2$ extends the Weyl group $W$ by shifts by vectors from the weight lattice $P$,
$$W_P^{\mathrm{aff}}=  P \rtimes W.$$ 
Any element $T(p)w\in W_P^{\mathrm{aff}}$ acts on any $x\in\R^2$ as $$T(p)w\cdot x= wx+p.$$
For any $M\in\N$, the abelian group $\Gamma_M \subset W_P^{\mathrm{aff}}$, $$\Gamma_M=\{\gamma_0,\gamma_1,\gamma_2\},$$ is a finite cyclic subgroup of $W_P^{\mathrm{aff}}$ with its three elements given explicitly by 
\begin{equation}\label{gammaexp}
\gamma_0=T(0)1,\q\gamma_1=T(M\om_1)r_1r_2,\q\gamma_2=T(M\om_2)(r_1r_2)^2.	
\end{equation}

The fundamental domain $F_P$ of the action of $W_P^{\mathrm{aff}}$ on $\R^2$, which consists of exactly one point from each $W_P^{\mathrm{aff}}-$orbit, is a subset of $F_Q$ in the form of a kite given by
\begin{align*}
F_P=&\left\lbrace x_1\om_1+x_2\om_2\in F_Q\, |\,(2x_1+x_2<1,x_1+2x_2<1)\,\vee\,(2x_1+x_2=1,x_1\ge x_2)\right\rbrace,
\end{align*}
For any $M\in\N$, the weight sets $\Lambda_{Q,M}$ and $\Lambda_{P,M}$ are defined as finite fragments of the lattice $P$ contained in the magnified fundamental domains $MF_Q$ and $MF_P$, respectively,
\begin{align}
	\Lambda_{Q,M}=  P\cap MF_Q,\label{LQM}\\
	\Lambda_{P,M}= P\cap MF_P. \label{LPM}
\end{align}

The weight set $\Lambda_{Q,M}$ is of the following explicit form,
\begin{align*}
\Lambda_{Q,M}=&\set{\la_1\omega_1+\la_2\omega_2}{\la_0,\la_1,\la_2\in\Z^{\ge0},\la_0+\la_1+\la_2=M}
\end{align*}
and thus, the points from $\Lambda_{Q,M}$ are described as
\begin{equation}\label{kacla}
\la=[\la_0,\la_1,\la_2]\in \Lambda_{Q,M}.	
\end{equation}
The weight set $\Lambda_{P,M}$ is of the explicit form,
\begin{align}
\Lambda_{P,M}=&\set{[\la_0,\la_1,\la_2]\in \Lambda_{Q,M}}{(\la_0>\la_1,\la_0>\la_2)\,\vee\,(\la_0=\la_1\geq \la_2)}\label{gridLa}.
\end{align}
The numbers of points in the weight sets $\Lambda_{Q,M}$ and $\Lambda_{P,M}$ are proven in \cite{HP,HMdis} to coincide with the number of points in $F_{P,M}$ and $F_{Q,M}$, respectively,
\begin{equation}
\abs{\Lambda_{Q,M}}=\abs{F_{P,M}},\q\abs{\Lambda_{P,M}}=\abs{F_{Q,M}}.\label{numbla}
\end{equation}
The action of the group $\Gamma_M$ on a weight $[\la_0,\la_1,\la_2]\in \Lambda_{Q,M}$ coincides with a cyclic permutation of the coordinates $[\la_0,\la_1,\la_2]$,
\begin{equation}\label{cycl}
\gamma_0[\la_0,\la_1,\la_2]=[\la_0,\la_1,\la_2],\q \gamma_1[\la_0,\la_1,\la_2]=[\la_2,\la_0,\la_1],\q\gamma_2[\la_0,\la_1,\la_2]=[\la_1,\la_2,\la_0],	
\end{equation}
and the weight set $\Lambda_{Q,M}$ is tiled by the images of $\Lambda_{P,M}$ under the action of $\Gamma_M$, 
\begin{equation}\label{gam}\Lambda_{Q,M}=\Gamma_M\Lambda_{P,M}.\end{equation}
 
The subset $\Lambda_M^\mathrm{fix}\subset\Lambda_{P,M}$ contains only the points stabilized by the entire $\Gamma_M$, $$\Lambda_M^{\mathrm{fix}}=\set{\la\in\Lambda_{P,M}}{\Gamma_M\la=\la}.$$ Note that there exists at most one point $\la=[\la_0,\la_1,\la_2]$ from $\Lambda_{P,M}$, which is fixed by $\Gamma_M$. From relation \eqref{cycl}, such a point satisfies $\la_0=\la_1=\la_2=M/3$ and, consequently, the set $\Lambda^\mathrm{fix}_M$ is empty if $M$ is not divisible by $3$, otherwise it has exactly one point,
\begin{equation}\label{numb5}
\abs{\Lambda^\mathrm{fix}_M}=\begin{cases}
1&M=0 \;\mathrm{mod}\,3,\\
0&\mathrm{otherwise}.
\end{cases}
\end{equation}

Interiors $\wt{\Lambda}_{Q,M}$ and $\wt{\Lambda}_{P,M}$ of the weight sets $\Lambda_{Q,M}$ and $\Lambda_{P,M}$ contain only points belonging to the interior of the magnified fundamental domain $MF_Q$,
\begin{align}
\wt{\Lambda}_{Q,M} &= P\cap\mathrm{int}(MF_Q),\label{LQMint}\\
\wt{\Lambda}_{P,M}&= P\cap MF_P\cap\mathrm{int}(MF_Q). \label{LPMint}
\end{align}
The explicit forms of the interiors of the weight sets are given as
\begin{align}
\wt{\Lambda}_{Q,M}=&\set{[\la_0,\la_1,\la_2]\in \Lambda_{Q,M}}{\la_0,\la_1,\la_2\in\N},\nonumber\\
\wt\Lambda_{P,M}=&\set{[\la_0,\la_1,\la_2]\in\wt\Lambda_{Q,M}}{(\la_0>\la_1,\la_0>\la_2)\,\vee\,(\la_0=\la_1\geq \la_2)}.\label{gridintLa}
\end{align}
The numbers of weights in the interior weight sets $\wt\Lambda_{Q,M}$ and $\wt\Lambda_{P,M}$ are proven in \cite{HP,HMdis} to coincide with the number of points in interiors $\wt F_{P,M}$ and $\wt F_{Q,M}$, respectively,
\begin{equation}\label{numbintla}
\abs{\wt\Lambda_{Q,M}}=\abs{\wt{F}_{P,M}},\q\abs{\wt\Lambda_{P,M}}=\abs{\wt{F}_{Q,M}}.\end{equation}
The weight sets $\Lambda_{P,7}$ and $\Lambda_{Q,7}$ are depicted in Figure \ref{obrbody}.

A discrete function $h_M:\Lambda_{Q,M}\map \N $ is defined by its values on coordinates \eqref{kacla}  of $\la\in \Lambda_{Q,M}$ in Table~\ref{hM}. The function $h_M$ depends only on the number of zero-valued coordinates and thus, is invariant under permutations of $[\la_0,\la_1,\la_2]$,
\begin{equation}\label{invgam}
h_M(\gamma\la)=h_M(\la),\q\gamma\in\Gamma_M.
\end{equation}

\bgroup
\def\arraystretch{1.3}
\begin{table}
\begin{tabular}{|c||c|c|c|c|c|c|c|}
\hline
$ $&$[\la_0,\la_1,\la_2]$&$[0,\la_1,\la_2]$&$[\la_0,0,\la_2]$&$[\la_0,\la_1,0]$&$[0,0,\la_2]$&$[0,\la_1,0]$&$[\la_0,0,0]$\\ \hline\hline
$h_M\left(\la\right)$&1&2&2&2&6&6&6\\\hline
\end{tabular}
\bigskip
\caption{{\small 
The values of the function $h_M$ on coordinates \eqref{kacla} of $\la\in\Lambda_{Q,M}$ with $\la_0,\la_1,\la_2\neq 0.$}}
\label{hM}
\end{table}
\egroup

\section{Point and weight sets}

\subsection{Point sets $H_M$ and $\wt H_M$}\

For any $M\in\N$, the point set $H_M$ is defined as a finite fragment of the honeycomb lattice $\tfrac{1}{M}(P\setminus Q)$ contained in $F_Q$,
\begin{equation*}
H_M=\tfrac{1}{M}(P\setminus Q)\cap F_Q.
\end{equation*}
Equivalently, the honeycomb lattice fragment $H_M$ is obtained from the point set $F_{P,M}$ by omitting the points of $F_{Q,M}$,
\begin{equation}\label{HM}
H_M=F_{P,M}\setminus F_{Q,M}.
\end{equation}
Introducing the following two point sets,
\begin{align}
H_M^{(1)}=&\tfrac{1}{M}(\om_1+Q)\cap F_Q, \label{HM1}\\
H_M^{(2)}=&\tfrac{1}{M}(\om_2+Q)\cap F_Q, \label{HM2} 
\end{align}
their disjoint union coincides due to \eqref{Pdec} with the point set $H_M$,
\begin{equation}\label{HMdis}
H_M=H_M^{(1)}\cup H_M^{(2)}.
\end{equation}
The explicit description of $H_M$ is directly derived from \eqref{gridF} and \eqref{HM},
$$H_M=\set{\tfrac{s_1}{M}\om_1+\tfrac{s_2}{M}\om_2}{s_0,s_1,s_2\in\Z^{\geq0},s_0+s_1+s_2=M,s_1+2s_2\neq 0\,\mathrm{mod}\,3}.$$
\begin{tvr}
The number of points in the point set $H_M$ is given by
\begin{equation}\label{numH}
\abs{H_M}=\begin{cases}
\tfrac13(M^2+3M)&M=0\;\mathrm{mod}\,3,\\
\tfrac{1}{3}(M^2+3M+2)& \text{otherwise}.
\end{cases}
\end{equation}
\end{tvr}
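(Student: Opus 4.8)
The plan is to exploit the disjoint decomposition already recorded in \eqref{HM}, namely $H_M = F_{P,M}\setminus F_{Q,M}$, together with the containment $F_{Q,M}\subset F_{P,M}$ noted just after \eqref{kac}. Since $H_M$ and $F_{Q,M}$ partition $F_{P,M}$, counting is additive and we immediately obtain
\[
\abs{H_M}=\abs{F_{P,M}}-\abs{F_{Q,M}}.
\]
From here the statement is purely a matter of substituting the known cardinalities \eqref{numb1} and \eqref{numb2} and simplifying in the two residue classes of $M$ modulo $3$.

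Concretely, I would split into the two cases dictated by \eqref{numb2}. When $M\equiv 0\pmod 3$, substitute $\abs{F_{P,M}}=\tfrac12(M^2+3M+2)$ and $\abs{F_{Q,M}}=\tfrac16(M^2+3M+6)$, put both over the common denominator $6$, and check that the numerator $3(M^2+3M+2)-(M^2+3M+6)=2M^2+6M$ collapses to $\tfrac13(M^2+3M)$. In the complementary case, substitute $\abs{F_{Q,M}}=\tfrac16(M^2+3M+2)$ instead; the numerator becomes $3(M^2+3M+2)-(M^2+3M+2)=2(M^2+3M+2)$, yielding $\tfrac13(M^2+3M+2)$. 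This reproduces exactly the two branches of \eqref{numH}.

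There is no real obstacle here: the only thing to be careful about is invoking the set-theoretic decomposition \eqref{HM} (rather than attempting a direct lattice-point count from the explicit description of $H_M$ via the congruence $s_1+2s_2\neq 0\bmod 3$, which would be more laborious), and keeping the two congruence cases aligned with those in \eqref{numb2}. One could alternatively double-check the result by a direct count: among the $\abs{F_{P,M}}$ triples $[s_0,s_1,s_2]$ with $s_0+s_1+s_2=M$, the residue $s_1+2s_2\bmod 3$ is equidistributed except for the symmetric correction at the fixed triple $s_0=s_1=s_2=M/3$ when $3\mid M$, which is the source of the case distinction; but the subtraction argument above is the cleanest route and is what I would present.
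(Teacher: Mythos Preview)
Your proposal is correct and follows exactly the paper's approach: invoke \eqref{HM} to write $\abs{H_M}=\abs{F_{P,M}}-\abs{F_{Q,M}}$ and then substitute the counting formulas \eqref{numb1} and \eqref{numb2}. You simply carry out the arithmetic in more detail than the paper, which leaves the case-by-case simplification to the reader.
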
 
\begin{proof}
Relation \eqref{HM} implies that the number of points in $H_M$ is calculated as
$$\abs{H_M}=\abs{F_{P,M}}-\abs{F_{Q,M}},$$
and equation \eqref{numH} follows from counting formulas \eqref{numb1} and \eqref{numb2}.
\end{proof}

The interior $\wt H_M \subset H_M$  contains only the points of $H_M$ belonging to the interior of $F_Q$,
\begin{equation*}
\wt{H}_M=\tfrac{1}{M}(P\setminus Q)\cap \mathrm{int}(F_Q)%=\left\{\tfrac{1}{M}(\om_1+Q)\cap \mathrm{int}(F_Q)\right\}\cup \left\{\tfrac{1}{M}(\om_2+Q)\cap \mathrm{int}(F_Q)\right\}.
\end{equation*}
and thus, it is formed by points from $\wt{F}_{P,M}$ which are not in $\wt{F}_{Q,M}$,
\begin{equation}\label{intHM}
\wt{H}_M=\wt{F}_{P,M}\setminus \wt{F}_{Q,M}.
\end{equation}
The explicit form of $\wt{H}_M$ is derived from \eqref{gridintF} and \eqref{intHM},
$$\wt{H}_M=\set{\tfrac{s_1}{M}\om_1+\tfrac{s_2}{M}\om_2}{s_0,s_1,s_2\in\N,s_0+s_1+s_2=M,s_1+2s_2\neq 0\,\mathrm{mod}\,3}$$
and counting formulas \eqref{numb3}, \eqref{numb4} and \eqref{intHM} yield the following proposition. 
\begin{tvr}
The number of points in the point set $\wt{H}_{M}$ for $M>3$ is given by
\begin{equation}\label{numHint}
 \abs{\wt{H}_M}=\begin{cases}
\tfrac13(M^2-3M)&M=0\;\mathrm{mod}\,3,\\
\tfrac13(M^2-3M+2)&\text{otherwise}.
\end{cases}
\end{equation}
\end{tvr}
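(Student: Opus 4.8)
The plan is to mirror the argument used for $\abs{H_M}$ in the preceding proposition and reduce the cardinality of $\wt H_M$ to the already-established interior-count formulas for $\wt F_{P,M}$ and $\wt F_{Q,M}$. The starting point is relation \eqref{intHM}, which exhibits $\wt H_M$ as the set difference $\wt F_{P,M}\setminus\wt F_{Q,M}$. Since the explicit descriptions in \eqref{gridintF} show $\wt F_{Q,M}$ to be the subset of $\wt F_{P,M}$ cut out by the extra congruence $s_1+2s_2\equiv 0\ (\mathrm{mod}\ 3)$, we have $\wt F_{Q,M}\subseteq\wt F_{P,M}$ and hence the cardinalities subtract cleanly:
\[
\abs{\wt H_M}=\abs{\wt F_{P,M}}-\abs{\wt F_{Q,M}}.
\]

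Next I would substitute the counting formulas \eqref{numb3} and \eqref{numb4}, which are valid precisely in the regime $M>3$ assumed in the statement, and treat the two residue classes of $M$ modulo $3$ separately. For $M\equiv 0\ (\mathrm{mod}\ 3)$ one combines $\tfrac12(M^2-3M+2)$ with $\tfrac16(M^2-3M+6)$ over the common denominator $6$; the constant terms cancel and the expression collapses to $\tfrac13(M^2-3M)$. In the remaining cases one subtracts $\tfrac16(M^2-3M+2)$ from $\tfrac12(M^2-3M+2)$, obtaining $\tfrac13(M^2-3M+2)$. Assembling the two cases yields \eqref{numHint}.

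The computation is entirely routine; the only point that warrants a word of care is the hypothesis $M>3$, which is exactly what makes the cited interior-count formulas \eqref{numb3}--\eqref{numb4} applicable (for small $M$ the interior sets may be empty or degenerate, and the quadratic expressions need not then return the correct value). No combinatorial argument beyond the set-difference bookkeeping already employed for $\abs{H_M}$ is required, so there is no substantial obstacle to anticipate.
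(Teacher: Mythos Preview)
Your proposal is correct and follows exactly the paper's own argument: the paper simply invokes relation \eqref{intHM} together with the counting formulas \eqref{numb3} and \eqref{numb4} to obtain the result by subtraction. Your write-up in fact supplies more detail than the paper, which dispatches the proposition in a single sentence.
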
 
The point sets $H_M$ and $\wt H_M$ are for $M=6$ depicted in Figure \ref{honey}. 
\begin{figure}%[h!]
\includegraphics{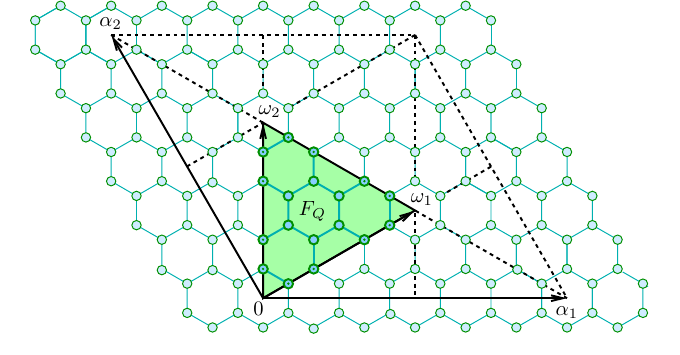}
\caption{\small The triangular fundamental domain $F_Q$ together with the vectors of the $\al-$ and $\omega-$bases are depicted. The blue nodes represent 18 points of the honeycomb lattice fragment $H_6$. Omitting the dotted nodes on the boundary of $F_Q$ yields the 6 points of the interior point set $\wt{H}_6$.}
\label{honey}
\end{figure}

\subsection{Weight sets $L_M$ and $\wt L_M$}\

For any $M\in \N$, the weight set $L_M \subset \Lambda_{P,M}$ contains the points $\la\in\Lambda_{P,M}$, which are not stabilized by~ $\Gamma_M$,
\begin{equation}\label{defLM}
L_M= \Lambda_{P,M}\setminus\Lambda^\mathrm{fix}_M.
\end{equation}
Relations \eqref{gridLa} and \eqref{defLM} yields the explicit form of $L_M$,
\begin{align*}
L_M=&\set{[\la_0,\la_1,\la_2]\in \Lambda_{Q,M}}{(\la_0>\la_1,\la_0>\la_2)\,\vee\,(\la_0=\la_1> \la_2)}.
\end{align*}

\begin{tvr}
The number of points in the weight set $L_M$ is given by
\begin{equation}\label{numL}
\abs{L_M}=\tfrac12\abs{H_M}.
\end{equation}
\end{tvr}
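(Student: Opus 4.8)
The plan is to express $\abs{L_M}$ through cardinalities that have already been evaluated in the paper and then to finish by a short case distinction modulo~$3$.

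First, since $\Lambda_M^{\mathrm{fix}}\subset\Lambda_{P,M}$, the definition \eqref{defLM} gives immediately $\abs{L_M}=\abs{\Lambda_{P,M}}-\abs{\Lambda_M^{\mathrm{fix}}}$. Invoking the cardinality coincidence $\abs{\Lambda_{P,M}}=\abs{F_{Q,M}}$ from \eqref{numbla} together with the value of $\abs{\Lambda_M^{\mathrm{fix}}}$ from \eqref{numb5}, this reads $\abs{L_M}=\abs{F_{Q,M}}-1$ when $M=0\;\mathrm{mod}\,3$ and $\abs{L_M}=\abs{F_{Q,M}}$ otherwise.

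Second, I would substitute the explicit counting formula \eqref{numb2} for $\abs{F_{Q,M}}$ and compare the outcome with the value of $\abs{H_M}$ supplied by the first proposition of this section, formula \eqref{numH}. In the case $M=0\;\mathrm{mod}\,3$ one obtains $\abs{L_M}=\tfrac16(M^2+3M+6)-1=\tfrac16(M^2+3M)$, which is exactly half of $\abs{H_M}=\tfrac13(M^2+3M)$; in the remaining case $\abs{L_M}=\tfrac16(M^2+3M+2)$, which is again exactly half of $\abs{H_M}=\tfrac13(M^2+3M+2)$. This establishes \eqref{numL}.

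A slightly more structural variant avoids \eqref{numb2} altogether: the group $\Gamma_M$ acts on $\Lambda_{Q,M}$ by the cyclic coordinate permutations \eqref{cycl}, hence every orbit has length $3$ except the single point of $\Lambda_M^{\mathrm{fix}}$, and since $\Lambda_{Q,M}=\Gamma_M\Lambda_{P,M}$ by \eqref{gam} with $L_M$ providing exactly one representative of each length-$3$ orbit, one gets $\abs{\Lambda_{Q,M}}=3\abs{L_M}+\abs{\Lambda_M^{\mathrm{fix}}}$; combined with $\abs{\Lambda_{Q,M}}=\abs{F_{P,M}}$ from \eqref{numbla} and with $\abs{H_M}=\abs{F_{P,M}}-\abs{F_{Q,M}}$, this yields \eqref{numL} once the elementary identity $\abs{F_{P,M}}=3\abs{F_{Q,M}}-2\abs{\Lambda_M^{\mathrm{fix}}}$ is checked from \eqref{numb1}, \eqref{numb2}, \eqref{numb5}. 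Either way there is no genuine obstacle: the only point requiring care is tracking the constants $+2$ versus $+6$ across the two congruence branches of the formulas for $\abs{F_{Q,M}}$ and $\abs{H_M}$.
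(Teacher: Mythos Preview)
Your primary argument is correct and is essentially identical to the paper's own proof: both start from $\abs{L_M}=\abs{\Lambda_{P,M}}-\abs{\Lambda_M^{\mathrm{fix}}}$, invoke \eqref{numbla}, \eqref{numb2} and \eqref{numb5} to obtain the explicit two-branch formula for $\abs{L_M}$, and then compare with \eqref{numH}. Your alternative structural variant via the $\Gamma_M$-orbit decomposition is also valid but, since it ultimately falls back on verifying $\abs{F_{P,M}}=3\abs{F_{Q,M}}-2\abs{\Lambda_M^{\mathrm{fix}}}$ from the same explicit formulas, it does not really bypass the casework---it merely repackages it.
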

\begin{proof}
Formula \eqref{defLM} implies for the number of points that 
$$\abs{L_M}=\abs{\Lambda_{P,M}}-\abs{\Lambda^\mathrm{fix}_M}.$$
Using formulas \eqref{numb2}, \eqref{numbla} and \eqref{numb5}, the number of points in $L_M$ is equal to
\begin{equation}\label{LMnum}
\abs{L_M}=\begin{cases}
\tfrac16(M^2+3M)&M=0\;\mathrm{mod}\,3,\\
\tfrac{1}{6}(M^2+3M+2)& \text{otherwise}.
\end{cases}	
\end{equation}
Direct comparison of counting relations \eqref{LMnum} and \eqref{numH} guarantees \eqref{numL}.
\end{proof}

The interior $\wt{L}_M\subset \wt\Lambda_{P,M}$ contains the points $\la\in\wt\Lambda_{P,M}$, which are not stabilized by $\Gamma_M$,
\begin{equation}\label{defLM2}
\wt{L}_M= \wt\Lambda_{P,M}\setminus\Lambda^\mathrm{fix}_M.
\end{equation}
The explicit form of $\wt{L}_M$ is derived from \eqref{gridintLa} and \eqref{defLM2},
\begin{align*}
\wt L_M=&\set{[\la_0,\la_1,\la_2]\in \wt\Lambda_{Q,M}}{(\la_0>\la_1,\la_0>\la_2)\,\vee\,(\la_0=\la_1> \la_2)}
\end{align*}
and formulas \eqref{numb4}, \eqref{numb5}, \eqref{numbintla} and \eqref{numHint} yield the following proposition.
\begin{tvr}
The number of points in the grid $\wt{L}_M$ for $M>3$ is given by
\begin{equation*}
\abs{\wt{L}_M}=\tfrac12\abs{\wt{H}_M}.
\end{equation*}
\end{tvr}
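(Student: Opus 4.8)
The plan is to mirror, with interiors in place of the full point sets, the proof of the analogous relation $\abs{L_M}=\tfrac12\abs{H_M}$ established above. First I would appeal to the defining identity \eqref{defLM2}, $\wt{L}_M=\wt\Lambda_{P,M}\setminus\Lambda^\mathrm{fix}_M$, together with the observation that the only candidate for a $\Gamma_M$-fixed weight, namely $[M/3,M/3,M/3]$, has all three coordinates equal to $M/3$, which is a positive integer whenever $M>3$ and $M=0\;\mathrm{mod}\,3$; by the explicit description \eqref{gridintLa} this point then belongs to $\wt\Lambda_{P,M}$, so the set difference in \eqref{defLM2} genuinely removes a point and
\begin{equation*}
\abs{\wt{L}_M}=\abs{\wt\Lambda_{P,M}}-\abs{\Lambda^\mathrm{fix}_M}.
\end{equation*}

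Next I would substitute $\abs{\wt\Lambda_{P,M}}=\abs{\wt{F}_{Q,M}}$ from \eqref{numbintla}, expand the right-hand side by the case-split counting formula \eqref{numb4}, and insert $\abs{\Lambda^\mathrm{fix}_M}$ from \eqref{numb5}. In the case $M=0\;\mathrm{mod}\,3$ the $+6$ coming from \eqref{numb4} is cancelled by the subtracted unit, giving $\tfrac16(M^2-3M)$, while in the remaining residue classes nothing is subtracted and one obtains $\tfrac16(M^2-3M+2)$; hence
\begin{equation*}
\abs{\wt{L}_M}=\begin{cases}\tfrac16(M^2-3M)&M=0\;\mathrm{mod}\,3,\\ \tfrac16(M^2-3M+2)&\text{otherwise}.\end{cases}
\end{equation*}
Comparing this expression, branch by branch, with the formula \eqref{numHint} for $\abs{\wt{H}_M}$ shows that each branch of $\abs{\wt{L}_M}$ is exactly half of the corresponding branch of $\abs{\wt{H}_M}$, which is the assertion.

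The computation is entirely routine once the quoted cardinalities are available; the one point that deserves a line of justification — and hence the main (if modest) obstacle — is the verification that in the $M=0\;\mathrm{mod}\,3$ case the fixed point is actually deleted from the \emph{interior} set $\wt\Lambda_{P,M}$ rather than only from $\Lambda_{P,M}$. This is precisely where the hypothesis $M>3$ is used: it forces $M/3\ge 1$, so that $[M/3,M/3,M/3]\in\wt\Lambda_{P,M}$, and it simultaneously guarantees the validity of the interior counting formulas \eqref{numb4} and \eqref{numHint} invoked in the argument.
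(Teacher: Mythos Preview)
Your proof is correct and follows essentially the same route as the paper, which simply cites formulas \eqref{numb4}, \eqref{numb5}, \eqref{numbintla} and \eqref{numHint} and leaves the arithmetic implicit. Your added justification that the fixed point $[M/3,M/3,M/3]$ genuinely lies in the interior set $\wt\Lambda_{P,M}$ is a welcome detail the paper omits; note, however, that this particular inclusion already holds for $M=3$ (since $M/3=1\in\N$), so the hypothesis $M>3$ is really only needed for the validity of the interior counting formulas \eqref{numb4} and \eqref{numHint}, as you also observe.
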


The weight sets $L_M$ and $\wt L_M$ are for $M=6$ depicted in Figure \ref{obrLM}. 
\begin{figure}%[h!]
\includegraphics{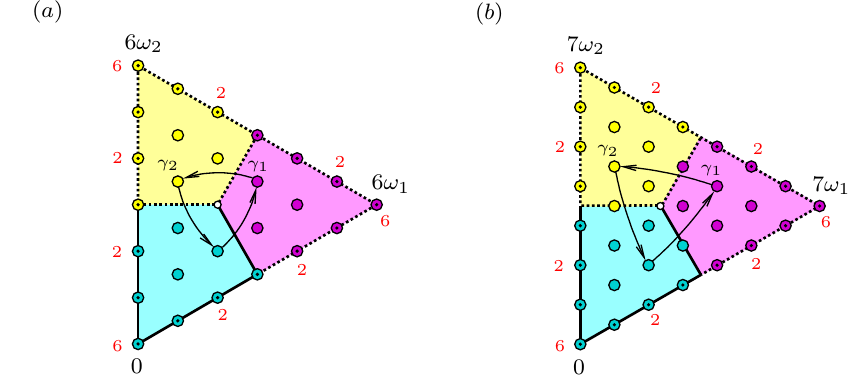}
\caption{\small $(a)$ The weight set $L_6$ consists of 9 cyan nodes. Omitting the dotted nodes on the boundary of $6F_Q$, yields 3 points of the interior weight $\wt{L}_6$. Action of the group $\Gamma_6$ is illustrated on the weight $\la=[3,2,1]\in L_6$. $(b)$ The weight set $L_7$ consists of 12 cyan nodes. Omitting the dotted nodes on the boundary of $7F_Q$, yields 5 points of the interior weight $\wt{L}_7$. Action of the group $\Gamma$ is illustrated on the weight $\la=[4,2,1]\in L_7$.  }
\label{obrLM}
\end{figure}

\section{Weyl orbit functions}

\subsection{$C-$ and $S-$functions}\

Two families of complex-valued smooth functions of variable $x\in\R^2$, labelled by $b\in P$, are defined via one-variable exponential functions as 
\begin{align}\label{orbfunC}
\Phi_b(x)&=\sum_{w\in W} e^{2 \pi i \sca{ wb}{x}},\\
\phi_b(x)&=\sum_{w\in W}\det (w)\, e^{2 \pi i \sca{ wb}{x}}.\label{orbfunS}
\end{align}
Properties of the Weyl orbit functions have been extensively studied in several articles \cite{KP1,KP2}. The functions \eqref{orbfunC} and \eqref{orbfunS} are called $C-$ and $S-$functions, respectively.
Explicit formulas for $C-$ and $S-$functions, with a weight $b=b_1\om_1+b_2\om_2$ and a point $x=x_1\omega_1+x_2\omega_2$  in $\om-$basis, are derived by employing scalar product formula \eqref{scprod} and Weyl orbit expression \eqref{orbit}, 
\begin{align}
\Phi_b(x)=& e^{\tfrac23\pi\i((2b_1+b_2)x_1+(b_1+2b_2)x_2)}+e^{\tfrac23\pi\i(-b_1+b_2)x_1+(b_1+2b_2)x_2)}+e^{\tfrac23\pi\i((-b_1-2b_2)x_1+(b_1-b_2)x_2)}\nonumber \\&+e^{\tfrac23\pi\i((-b_1-2b_2)x_1+(-2b_1-b_2)x_2)}+e^{\tfrac23\pi\i((-b_1+b_2)x_1+(-2b_1-b_2)x_2)}+e^{\tfrac23\pi\i((2b_1+b_2)x_1+(b_1-b_2)x_2)},\label{Cexp}\\
\phi_b(x)=& e^{\tfrac23\pi\i((2b_1+b_2)x_1+(b_1+2b_2)x_2)}-e^{\tfrac23\pi\i(-b_1+b_2)x_1+(b_1+2b_2)x_2)}+e^{\tfrac23\pi\i((-b_1-2b_2)x_1+(b_1-b_2)x_2)}\nonumber\\&-e^{\tfrac23\pi\i((-b_1-2b_2)x_1+(-2b_1-b_2)x_2)}+e^{\tfrac23\pi\i((-b_1+b_2)x_1+(-2b_1-b_2)x_2)}-e^{\tfrac23\pi\i((2b_1+b_2)x_1+(b_1-b_2)x_2)}.\nonumber
\end{align}

Recall from \cite{KP1,KP2} that $C-$ and $S-$functions are (anti)symmetric with respect to the Weyl group, i.e. for any $w\in W$ it holds that
\begin{alignat}{4}
&\Phi_b(wx)&=\Phi_b(x),&\q\q& \phi_b(wx)=\det (w) \phi_b(x),\label{symW}%\\
%&\Phi_{wb}(x)&=\Phi_b(x),&\q\q& \phi_{wb}(x)=\det (w) \phi_b(x).\label{sympar}
\end{alignat}
Furthermore, both families are invariant with respect to translations by any $q\in Q$,
\begin{equation}
\Phi_b(x+q)=\Phi_b(x),\q\q \phi_b(x+q)=\phi_b(x).\label{symQ}
\end{equation}
Relations \eqref{symW} and \eqref{symQ} imply that the Weyl orbit functions are (anti)symmetric with respect to the affine Weyl group and, thus, they are restricted only to the fundamental domain \eqref{F} of the affine Weyl group. Moreover, the $S-$functions vanish on the boundary of $F_Q$ and the normal derivative of the $C-$functions to the boundary of $F_Q$ is zero. 

Denoting the Hartley kernel function by 
\begin{equation}\label{cas}
	\mathrm{cas}\,\alpha=\cos{\alpha}+\sin{\alpha}, \q \al\in \R,
\end{equation}
the Weyl orbit functions are modified \cite{HMdis,HJedis} as
\begin{align}\label{harfunC}
\hart^\id_b(x)&=\sum_{w\in W}\mathrm{cas}\,(2\pi\langle wb,x\rangle),\\
\hart^e_b(x)&=\sum_{w\in W}\det(w)\mathrm{cas}\,(2\pi\langle wb,x\rangle).\label{harfunS}
\end{align}
The (anti)symmetry relation \eqref{symW} and $Q-$shift invariance \eqref{symQ} are preserved by the Hartley functions,
\begin{align}
\hart^\id_b(wx)&=\hart^\id_b(x),\q\q\hart^e_b(wx)=\det (w) \hart^e_b(x),\label{Hartarg}\\
\hart^\id_b(x+q)&=\hart^\id_b(x),\q\q \hart^e_b(x+q)=\hart^e_b(x).\label{Hartshift}
\end{align}
Therefore, the Hartley $S-$functions $\hart^e_b$  vanish on the boundary of $F_Q$ and the normal derivative of the Hartley $C-$functions $\hart^\id_b$ to the boundary of $F_Q$ is also zero.

\subsection{Discrete orthogonality on $F_{P,M}$ and $\wt{F}_{P,M}$}\

Using coefficients $\ep(s)$ from Table \ref{eps}, a scalar product of two functions $f,g:F_{P,M}\map \C$ on the refined fragment of the weight lattice \eqref{FPM} is defined  as 
\begin{equation}\label{scaC}
\sca{f}{g}_{F_{P,M}}=\sum_{s\in F_{P,M}}\ep(s)f(s)\overline{g(s)}.
\end{equation}
Discrete orthogonality relations of the $C-$functions \eqref{orbfunC} and Hartley $C-$functions \eqref{harfunC}, labelled by the weights from the weight set \eqref{LQM} and with respect to the scalar product \eqref{scaC}, are derived in \cite{HP,HJedis}. The discrete orthogonality relations are for any $\la,\la'\in\Lambda_{Q,M}$ of the form
\begin{align}\label{ort1}
\sca{\Phi_\la}{\Phi_{\la'}}_{F_{P,M}} &=18M^2h_M(\la)\delta_{\la\la'}, \\
\sca{\hart^\id_\la}{\hart^\id_{\la'}}_{F_{P,M}}&=18M^2h_M(\la)\delta_{\la\la'},\nonumber
\end{align}
where the coefficients $h_M(\la)$ are listed in Table \ref{hM}.

Since for any $M\in\N,M>3$ and any $s\in \wt{F}_{P,M}$ it holds that $\ep(s)=6$, a scalar product of two complex valued functions $f,g :\wt{F}_{P,M}\map \C$ on the interior of the refined fragment of the weight lattice~\eqref{FPMint} is defined as
\begin{equation}\label{scaS}
\sca{f}{g}_{\wt{F}_{P,M}}=6\sum_{s\in \wt{F}_{P,M}}f(s)\overline{g(s)}.
\end{equation}

Discrete orthogonality relations of the $S-$functions \eqref{orbfunS} and Hartley $S-$functions \eqref{harfunS}, labelled by the weights from the interior weight set \eqref{LQMint} and with respect to the scalar product \eqref{scaS} are derived in \cite{HP,HJedis}. The discrete orthogonality relations are for any $\la,\la'\in\wt\Lambda_{Q,M}$ of the form
\begin{align*}
\sca{\phi_\la}{\phi_{\la'}}_{\wt{F}_{P,M}}=18M^2\delta_{\la\la'}, \\
\sca{\hart^e_\la}{\hart^e_{\la'}}_{\wt{F}_{P,M}}=18M^2\delta_{\la\la'}.
\end{align*}

\subsection{Discrete orthogonality on $F_{Q,M}$ and $\wt{F}_{Q,M}$}\

A scalar product of two functions $f,g :F_{Q,M}\map \C$ on the refined fragment of the root lattice \eqref{FQM} is defined as
\begin{equation}\label{scaQC}
\sca{f}{g}_{F_{Q,M}}=\sum_{s\in F_{Q,M}}\ep(s)f(s)\overline{g(s)}.
\end{equation}
Discrete orthogonality relations of the $C-$functions \eqref{orbfunC} and Hartley $C-$functions \eqref{harfunC}, labelled by the weights from the weight set \eqref{LPM} and with respect to the scalar product \eqref{scaQC}, are derived in \cite{HMdis}. For any $\la,\la'\in\Lambda_{P,M}$ it holds that
\begin{align}\label{ort3}
\sca{\Phi_\la}{\Phi_{\la'}}_{F_{Q,M}}&=6M^2d(\la)h_M(\la)\delta_{\la\la'},\\
\sca{\hart^\id_\la}{\hart^\id_{\la'}}_{F_{Q,M}} &= 6M^2d(\la)h_M(\la)\delta_{\la\la'},\nonumber
\end{align}
where $$d(\la)=\begin{cases}
3&\la_0=\la_1=\la_2,\\
1&\text{otherwise}.
\end{cases}
$$

A scalar product of two functions $f,g :\wt{F}_{Q,M}\map \C$ on the interior of the refined fragment of the root lattice~\eqref{FQMint} is defined as
\begin{equation}\label{scaQS}
\sca{f}{g}_{\wt{F}_{Q,M}}=6\sum_{s\in \wt{F}_{Q,M}}f(s)\overline{g(s)}.
\end{equation}
Discrete orthogonality relations of the $S-$functions \eqref{orbfunS} and Hartley $S-$functions \eqref{harfunS}, labelled by the weights from the weight set \eqref{LPMint} and with respect to the scalar product \eqref{scaQS}, are derived in \cite{HMdis}. For any $\la,\la'\in\wt\Lambda_{P,M}$ it holds that,
\begin{align*}
\sca{\phi_\la}{\phi_{\la'}}_{\wt{F}_{Q,M}} &= 6M^2d(\la)\delta_{\la\la'}, \\
\sca{\hart^e_\la}{\hart^e_{\la'}}_{\wt{F}_{Q,M}} &=6M^2d(\la)\delta_{\la\la'}.
\end{align*}

\section{Honeycomb Weyl and Hartley orbit functions}

\subsection{Extended $C-$ and $S-$functions}\

Extended Weyl orbit functions are complex valued smooth functions  induced from the standard $C-$ and $S-$functions. For a fixed $M\in\N$, the extended  $C-$functions $\Phi^\pm_\la$ of variable $x\in \R^2$, labeled by $\la\in L_M$, are introduced by
\begin{equation}\label{def1}
\begin{aligned} \Phi^+_\la (x)&=  \mu^{+,0}_{\la} \Phi_{\la}(x)+ \mu^{+,1}_{\la} \Phi_{\gamma_1\la}(x)+\mu^{+,2}_{\la} \Phi_{\gamma_2\la}(x), \\
 \Phi^-_\la (x)&=  \mu^{-,0}_{\la} \Phi_{\la}(x)+ \mu^{-,1}_{\la} \Phi_{\gamma_1\la}(x) +\mu^{-,2}_{\la} \Phi_{\gamma_2\la}(x),
\end{aligned}
\end{equation}
where $\mu_\la^{\pm,0},\mu_\la^{\pm,1},\mu_\la^{\pm,2}\in \C$ denote for each $\la\in L_M$ six arbitrary extension coefficients. For a fixed $M>3$, the extended $S-$functions $\phi^\pm_\la$ of variable $x\in\R^2$, labeled by $\la\in \wt{L}_M$, are introduced by
\begin{equation}\label{def2}
\begin{aligned} \phi^+_\la (x)&=  \mu^{+,0}_{\la} \phi_{\la}(x)+ \mu^{+,1}_{\la} \phi_{\gamma_1\la}(x)+\mu^{+,2}_{\la} \phi_{\gamma_2\la}(x), \\
 \phi^-_\la (x)&=  \mu^{-,0}_{\la} \phi_{\la}(x)+ \mu^{-,1}_{\la} \phi_{\gamma_1\la}(x) +\mu^{-,2}_{\la} \phi_{\gamma_2\la}(x).
\end{aligned}
\end{equation}

The extended $C-$ and $S-$functions inherit the argument symmetry \eqref{symW} of Weyl orbit functions with respect to any $w\in W$,
\begin{equation}\label{prop1}
\Phi^\pm_\la(wx)=\Phi^\pm_\la(x),\q \phi^\pm_\la(wx)=\det (w)\phi^\pm_\la(x),
\end{equation}
and the invariance \eqref{symQ} with respect to the shifts from $q\in Q$, 
\begin{equation}\label{prop2}
\Phi^\pm_\la(x+q)=\Phi^\pm_\la(x),\q \phi^\pm_\la(x+q)=\phi^\pm_\la(x).
\end{equation}
Relations \eqref{prop1} and \eqref{prop2} imply that the extended Weyl orbit functions are also (anti)symmetric with respect to the affine Weyl group and, thus, they are restricted only to the fundamental domain \eqref{F} of the affine Weyl group. Moreover, the extended $S-$functions vanish on the boundary of $F_Q$ and the normal derivative of the extended $C-$functions to the boundary of $F_Q$ is zero.

The extended Weyl orbit functions \eqref{def1} and \eqref{def2} are modified using the Hartley orbit functions \eqref{harfunC} and \eqref{harfunS}. For a fixed $M\in\N$, the extended Hartley $C-$functions $\hart^{\id,\pm}_\la$ of variable $x\in\R^2$, parametrized by $\la\in L_M$, are defined by
\begin{equation}\label{extHarC}
\begin{aligned} \hart^{\id,+}_\la (x)&=  \mu^{+,0}_{\la} \hart^\id_{\la}(x)+ \mu^{+,1}_{\la} \hart^\id_{\gamma_1\la}(x)+\mu^{+,2}_{\la} \hart^\id_{\gamma_2\la}(x), \\
 \hart^{\id,-}_\la (x)&=  \mu^{-,0}_{\la} \hart^\id_{\la}(x)+ \mu^{-,1}_{\la} \hart^\id_{\gamma_1\la}(x) +\mu^{-,2}_{\la} \hart^\id_{\gamma_2\la}(x),
\end{aligned}
\end{equation}
where $\mu_\la^{\pm,0},\mu_\la^{\pm,1},\mu_\la^{\pm,2}\in \C$ denote for each $\la\in L_M$ six arbitrary extension coefficients. For a fixed $M>3$, the extended Hartley $S-$functions $\hart^{e,\pm}_\la$ of variable $x\in\R^2$, labelled by $\la\in \wt{L}_M$, are introduced by
\begin{equation}\label{extHarS}
\begin{aligned} \hart^{e,+}_\la (x)&=  \mu^{+,0}_{\la} \hart^e_{\la}(x)+ \mu^{+,1}_{\la} \hart^e_{\gamma_1\la}(x)+\mu^{+,2}_{\la} \hart^e_{\gamma_2\la}(x), \\
 \hart^{e,-}_\la (x)&=  \mu^{-,0}_{\la} \hart^e_{\la}(x)+ \mu^{-,1}_{\la} \hart^e_{\gamma_1\la}(x) +\mu^{-,2}_{\la} \hart^e_{\gamma_2\la}(x).
\end{aligned}
\end{equation}
Restricting the extension coefficients to real numbers $\mu_\la^{\pm,0},\mu_\la^{\pm,1},\mu_\la^{\pm,2}\in \R$, the functions $\hart^{\id,\pm}_\la$ and  $\hart^{e,\pm}_\la$ become real valued.
Similarly to the extended $C-$ and $S-$functions, the extended Hartley $C-$ and $S-$functions inherit argument symmetry \eqref{Hartarg} of the $\hart^\id-$functions and $\hart^e-$functions with respect to any $w\in W,$ 
\begin{equation}\label{prop3}
\hart^{\id,\pm}_\la(wx)=\hart^{\id,\pm}_\la(x),\q \hart^{e,\pm}_\la(wx)=\det (w)\hart^{e,\pm}_\la(x),
\end{equation}
and the invariance \eqref{Hartshift} with respect to the shifts from $q\in Q$, 
\begin{equation}\label{prop4}
\hart^{\id,\pm}_\la(x+q)=\hart^{\id,\pm}_\la(x),\q \hart^{e,\pm}_\la(x+q)=\hart^{e,\pm}_\la(x).
\end{equation}
Relations \eqref{prop3} and \eqref{prop4} imply that the extended Hartley orbit functions are also (anti)symmetric with respect to the affine Weyl group and, thus, they are restricted only to the fundamental domain \eqref{F} of the affine Weyl group. Moreover, the extended Hartley $S-$functions vanish on the boundary of $F_Q$ and the normal derivative of the extended Hartley $C-$functions to the boundary of $F_Q$ is zero.

\subsection{Honeycomb $C-$ and $S-$ functions}\

Special classes of extended $C-$ and $S-$functions and their Hartley versions are obtained by imposing three additional conditions on the extension coefficients $\mu_\la^{\pm,0},\mu_\la^{\pm,1},\mu_\la^{\pm,2}\in \C$. 
Two discrete normalization functions $\mu^+,\mu^- :L_M  \map \R $  are for any $\la \in L_M$ defined as
\begin{equation}\label{mu}
\mu^\pm(\la)=\abs{\mu^{\pm,0}_\la}^2+\abs{\mu^{\pm,1}_\la}^2+\abs{\mu^{\pm,2}_\la}^2-\mathrm{Re}{\left(\mu^{\pm,0}_\la\overline{\mu^{\pm,1}_\la}+\mu^{\pm,0}_\la\overline{\mu^{\pm,2}_\la}+\mu^{\pm,1}_\la\overline{\mu^{\pm,2}_\la}\right)},
\end{equation}
and an intertwining function $\beta :L_M  \map \C $ is defined as 
\begin{equation}\label{beta}
\begin{alignedat}{2}
&\beta(\lambda)&=&\, 2\left(\mu^{+,0}_\la\overline{\mu^{-,0}_{\la}}+\mu^{+,1}_\la\overline{\mu^{-,1}_{\la}}+\mu^{+,2}_\la\overline{\mu^{-,2}_{\la}}\right)-\mu^{+,0}_\la\left(\overline{\mu^{-,1}_{\la}}+\overline{\mu^{-,2}_{\la}}\right)\\  &&&-\mu^{+,1}_\la\left(\overline{\mu^{-,0}_{\la}}+\overline{\mu^{-,2}_{\la}}\right)-\mu^{+,2}_\la\left(\overline{\mu^{-,0}_{\la}}+\overline{\mu^{-,1}_{\la}}\right).
\end{alignedat}
\end{equation}

The $\Phi^\pm_\la-$functions \eqref{def1}, for which both normalization functions are positive and the intertwining functions vanishes,
\begin{equation}\label{cond1}
\mu^\pm(\lambda)>0,\q\beta(\lambda)=0,\q \la \in L_M,
\end{equation}
are named the honeycomb $C-$functions and denoted by $\mathrm{Ch}^\pm_\la$. Similarly, the $\phi^\pm_\la-$functions \eqref{def2}, which satisfy 
\begin{equation}\label{cond2}
\mu^\pm(\lambda)>0,\q \beta(\lambda)=0,\q \la \in \wt L_M, 
\end{equation}
 are named the honeycomb $S-$functions and denoted by $\mathrm{Sh}^\pm_\la$.
The extended Hartley $C-$functions \eqref{extHarC} satisfying the conditions \eqref{cond1} are named the honeycomb Hartley $C-$functions and denoted by $\mathrm{Cah}^\pm_\la$. The extended Hartley $S-$functions \eqref{extHarS} satisfying the conditions \eqref{cond2} are named the honeycomb Hartley $S-$functions and denoted by $\mathrm{Sah}^\pm_\la$.

For any two complex discrete functions $f,g:H_M\map \C$, a scalar product on the finite fragment of the honeycomb lattice \eqref{HM} is defined as
\begin{equation}\label{sca1}
\langle f,g\rangle_{H_M}=\sum_{s\in H_M}\ep(s)f(s)\overline{g(s)},
\end{equation}
and the resulting finite-dimensional Hilbert space of complex valued functions is denoted by $\mathcal{H}_M$.
\begin{thm}\label{thmdis}
Any set of the honeycomb $C-$functions $\mathrm{Ch}_\la^\pm$, $\la\in L_M$, restricted to $H_M$, forms an orthogonal basis of the space $\mathcal{H}_M$. For any $\la,\la'\in L_M$ it holds that
\begin{align}
&\langle \mathrm{Ch}_\la^\pm,\mathrm{Ch}_{\la'}^{\pm} \rangle_{H_M}=12M^2h_M(\la)\mu^\pm(\la)\delta_{\la\la'}, \label{hdis1}\\
&\langle \mathrm{Ch}_\la^+,\mathrm{Ch}_{\la'}^{-} \rangle_{H_M}=0.\label{hdis2}
\end{align}
\end{thm}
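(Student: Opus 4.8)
The plan is to reduce the honeycomb orthogonality on $H_M$ to the two known orthogonality relations on $F_{P,M}$ and $F_{Q,M}$ recalled in \eqref{ort1} and \eqref{ort3}. First I would observe that, by \eqref{HM}, the set $H_M$ is the disjoint complement of $F_{Q,M}$ inside $F_{P,M}$, so for any $f,g:F_{P,M}\map\C$ one has the additive decomposition of scalar products
\begin{equation*}
\sca{f}{g}_{H_M}=\sca{f}{g}_{F_{P,M}}-\sca{f}{g}_{F_{Q,M}},
\end{equation*}
coming directly from \eqref{sca1}, \eqref{scaC}, \eqref{scaQC}, since $\ep(s)$ is the common weight. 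Applying this with $f=\mathrm{Ch}^\pm_\la$ and $g=\mathrm{Ch}^\pm_{\la'}$ and expanding each honeycomb function via \eqref{def1} into the three $C$-functions $\Phi_\la,\Phi_{\gamma_1\la},\Phi_{\gamma_2\la}$, the problem becomes a finite bilinear computation in the $3\times 3$ Gram-type matrices of the $C$-functions $\{\Phi_{\gamma_i\la}\}$ against $\{\Phi_{\gamma_j\la'}\}$ with respect to each of the two scalar products.

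The key structural input is that $\Gamma_M$ permutes the weights of $\Lambda_{Q,M}$ cyclically (relation \eqref{cycl}) and that $\Lambda_{Q,M}=\Gamma_M\Lambda_{P,M}$ by \eqref{gam}, with the orbit of $\la\in L_M$ having exactly three distinct elements $\la,\gamma_1\la,\gamma_2\la$ (since $L_M$ excludes the $\Gamma_M$-fixed point, by \eqref{defLM}). Thus for $\la,\la'\in L_M$ with distinct $\Gamma_M$-orbits, all nine weights $\gamma_i\la,\gamma_j\la'$ are pairwise distinct, so by \eqref{ort1} and \eqref{ort3} every cross term vanishes and both scalar products of $\mathrm{Ch}^\pm_\la$ with $\mathrm{Ch}^{\pm}_{\la'}$ are zero, giving the $\delta_{\la\la'}$ on the diagonal for $\la\neq\la'$. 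For $\la=\la'$ I would use that $h_M$ is $\Gamma_M$-invariant \eqref{invgam} and that on $L_M$ one has $d(\la)=1$ (the value $d(\la)=3$ occurs only at $\la_0=\la_1=\la_2$, which is precisely the excluded fixed point), so \eqref{ort1} contributes $18M^2h_M(\la)$ to each of the three diagonal inner products $\sca{\Phi_{\gamma_i\la}}{\Phi_{\gamma_i\la}}$, while \eqref{ort3} subtracts $6M^2h_M(\la)$ from each; the cross terms $i\neq j$ vanish in both. Collecting, $\sca{\mathrm{Ch}^\pm_\la}{\mathrm{Ch}^\pm_\la}_{H_M}=12M^2h_M(\la)\bigl(\sum_i|\mu^{\pm,i}_\la|^2-\sum_{i<j}2\,\mathrm{Re}(\mu^{\pm,i}_\la\overline{\mu^{\pm,j}_\la})\bigr)$, which is exactly $12M^2h_M(\la)\mu^\pm(\la)$ by definition \eqref{mu}; positivity of $\mu^\pm(\la)$ from \eqref{cond1} then guarantees these are genuine (nonzero) basis vectors, and the count $|L_M|=\tfrac12|H_M|=\dim\mathcal H_M$ from \eqref{numL} upgrades orthogonality to a basis. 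The mixed relation \eqref{hdis2} follows the same pattern: expanding $\sca{\mathrm{Ch}^+_\la}{\mathrm{Ch}^-_{\la'}}_{H_M}$ gives $0$ off the diagonal by distinctness of weights, and on the diagonal the combination of coefficients that appears is precisely $12M^2h_M(\la)$ times the intertwining function $\beta(\la)$ of \eqref{beta}, which vanishes by the honeycomb condition \eqref{cond1}.

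The main obstacle is purely bookkeeping rather than conceptual: one must verify carefully that the coefficient combination emerging from subtracting the $F_{Q,M}$ Gram matrix from the $F_{P,M}$ Gram matrix is exactly the quadratic form \eqref{mu} (respectively the bilinear form \eqref{beta}), i.e. that the ratio $18:6$ of the two normalizations produces the coefficient $2$ on the diagonal and $-1$ on the off-diagonal entries. I would present this as: the relevant $3\times3$ matrix is $18M^2 h_M(\la)\,I - 6M^2 h_M(\la)\,J$ where $J$ is the all-ones matrix, i.e. $6M^2h_M(\la)(3I-J)$, and $\mu^+$ is the Hermitian form $v\mapsto v^\dagger\tfrac12(3I-J)v$ evaluated at $v=(\mu^{+,0}_\la,\mu^{+,1}_\la,\mu^{+,2}_\la)$, with $\beta$ the associated sesquilinear pairing between the $+$ and $-$ coefficient vectors; then \eqref{hdis1}–\eqref{hdis2} are immediate. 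I should also double-check the edge case $\la=\la'$ but lying in the same orbit with $\gamma_i\la=\la'$ for some $i\neq0$ — but this cannot happen for $\la,\la'\in L_M$ since $L_M\subset\Lambda_{P,M}$ is a fundamental domain for $\Gamma_M$ on $\Lambda_{Q,M}\setminus\Lambda^{\mathrm{fix}}_M$, so distinct elements of $L_M$ have disjoint orbits.
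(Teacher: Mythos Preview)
Your overall architecture matches the paper's: decompose $\sca{\cdot}{\cdot}_{H_M}$ as $\sca{\cdot}{\cdot}_{F_{P,M}}-\sca{\cdot}{\cdot}_{F_{Q,M}}$, expand via \eqref{def1}, and reduce to $3\times3$ Gram matrices. The counting argument via \eqref{numL} and the observation $d(\la)=1$ on $L_M$ are also correct.

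However, there is a genuine gap in how you handle the $F_{Q,M}$ part. You write that ``the cross terms $i\neq j$ vanish in both'', but this is false on $F_{Q,M}$: the orthogonality relation \eqref{ort3} is stated only for labels in $\Lambda_{P,M}$, and $\gamma_i\la\notin\Lambda_{P,M}$ for $i\neq0$, so you cannot invoke it directly. In fact the cross terms on $F_{Q,M}$ do \emph{not} vanish. The missing ingredient, which the paper supplies as identity \eqref{iddd}, is that $\Phi_{\gamma_k\la}(s)=\Phi_\la(s)$ for every $s\in F_{Q,M}$: writing $\gamma_k\la=(r_1r_2)^k\la+M\om_k$ and using that $\sca{Mw\om_k}{s}\in\Z$ for $s\in\tfrac1M Q$ (since $P$ and $Q$ are $W$-invariant and $\Z$-dual), each exponential in \eqref{orbfunC} is unchanged. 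Consequently all nine entries $\sca{\Phi_{\gamma_i\la}}{\Phi_{\gamma_j\la'}}_{F_{Q,M}}$ equal $\sca{\Phi_\la}{\Phi_{\la'}}_{F_{Q,M}}=6M^2h_M(\la)\delta_{\la\la'}$, so the $F_{Q,M}$ Gram matrix is $6M^2h_M(\la)\,J$, not $6M^2h_M(\la)\,I$. This is precisely what makes your third-paragraph matrix $18M^2h_M(\la)I-6M^2h_M(\la)J=6M^2h_M(\la)(3I-J)$ correct; but as written, that formula contradicts your second-paragraph claim, and you never justify the appearance of $J$.

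Two smaller bookkeeping slips follow from this. First, your displayed formula $12M^2h_M(\la)\bigl(\sum_i|\mu^{\pm,i}_\la|^2-2\sum_{i<j}\mathrm{Re}(\mu^{\pm,i}_\la\overline{\mu^{\pm,j}_\la})\bigr)$ has an extra factor of $2$ in the cross term and is not equal to $12M^2h_M(\la)\mu^\pm(\la)$; the correct identity is $v^\dagger\tfrac12(3I-J)v=\mu^\pm(\la)$, as you state later. Second, the mixed inner product equals $6M^2h_M(\la)\beta(\la)$, not $12M^2h_M(\la)\beta(\la)$ (cf.\ \eqref{res}); this does not affect \eqref{hdis2} since $\beta(\la)=0$. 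Once you insert the identity $\Phi_{\gamma_k\la}|_{F_{Q,M}}=\Phi_\la|_{F_{Q,M}}$ and correct these coefficients, your argument is complete and essentially identical to the paper's.
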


\begin{proof}
Let $t$ and $t'$ stand for the symbols $+$ and $-$, i.e. $t,t'\in \{+,-\}$ . The point set relation \eqref{HM} guarantees for the scalar products \eqref{scaC}, \eqref{scaQC} and \eqref{sca1} that
\begin{equation}\label{scap}
	\langle \mathrm{Ch}^t_\la,\mathrm{Ch}^{t'}_{\la'}\rangle_{H_M}=\langle \mathrm{Ch}^t_\la,\mathrm{Ch}^{t'}_{\la'}\rangle_{F_{P,M}}-\langle \mathrm{Ch}^t_\la,\mathrm{Ch}^{t'}_{\la'}\rangle_{F_{Q,M}}.
\end{equation}
Substituting definition of the extended $C-$functions \eqref{def1} into \eqref{scap} yields
\begin{equation}\label{simp}
		\langle \mathrm{Ch}^t_\la,\mathrm{Ch}^{t'}_{\la'}\rangle_{H_M}=\sum_{k,l=0}^2\mu_\la^{t,k}\overline{\mu_{\la'}^{t',l}}\left(\langle\Phi_{\gamma_k\la},\Phi_{\gamma_l\la'}\rangle_{F_{P,M}}-\langle\Phi_{\gamma_k\la},\Phi_{\gamma_l\la'}\rangle_{F_{Q,M}}\right)
\end{equation}

Relation \eqref{gam} and definition \eqref{defLM} grant that both $\gamma_k\la,\gamma_l\la'\in\Lambda_{Q,M}$ and thus, the   
discrete orthogonality relations \eqref{ort1} and the $\Gamma_M-$invariance \eqref{invgam} ensure that 
\begin{equation}\label{pom0}
\langle\Phi_{\gamma_k\la},\Phi_{\gamma_l\la'}\rangle_{F_{P,M}}=18M^2h_M(\la)\delta_{\gamma_k\la,\gamma_l\la'}.
\end{equation}
Since $F_P$ is a fundamental domain of $W_P^{\mathrm{aff}}$, the equality $\gamma_k\la=\gamma_l\la'$ and definition \eqref{LPM} imply that $\la=\la'$ and $\gamma_k^{-1}\gamma_l$ stabilizes $\la \in L_M$. Then since  $\Gamma_M$ is a cyclic group of prime order and the stabilizer subgroup of $\la \in L_M$ cannot be due to \eqref{defLM} the entire $\Gamma_M$, it follows that  $\gamma_k=\gamma_l$. Thus, the orthogonality relation \eqref{pom0} is simplified as
\begin{equation}\label{pom1}
\langle\Phi_{\gamma_k\la},\Phi_{\gamma_l\la'}\rangle_{F_{P,M}}=18M^2h_M(\la)\delta_{\la\la'}\delta_{kl}.
\end{equation}

The explicit form \eqref{gammaexp} of the group $\Gamma_M$ yields for any $w\in W$ and $s\in\tfrac{1}{M}Q$ the equality 
\begin{equation}\label{scash}
\langle w\gamma_k\la,s\rangle=\langle w(r_1r_2)^k\la+Mw\om_k,s\rangle=\langle w(r_1r_2)^k\la,s\rangle+\langle Mw\om_k,s\rangle, \q k=1,2	.
\end{equation}
Since the lattices $P$ and $Q$ are $W-$invariant and $\Z-$dual due to relation \eqref{Zdual}, it holds for any $s\in F_{Q,M}$ that  $\langle Mw\om_k,s\rangle \in \Z$ and thus, for all $\gamma_k \in \Gamma_M$ the following identity is obtained,
\begin{equation}\label{iddd}
\Phi_{\gamma_k\la}(s)=\Phi_{\la}(s),\q s\in F_{Q,M}.	
\end{equation}
The discrete orthogonality relations \eqref{ort3} and \eqref{iddd} then grant that
\begin{equation}\label{pom2}
\langle\Phi_{\gamma_k\la},\Phi_{\gamma_l\la'}\rangle_{F_{Q,M}}=\langle\Phi_{\la},\Phi_{\la'}\rangle_{F_{Q,M}}=6M^2h_M(\la)\delta_{\la\la'}.
\end{equation}
Substituting the resulting scalar products \eqref{pom1} and \eqref{pom2} into \eqref{simp} produces the relations
\begin{align}
\langle \mathrm{Ch}^t_\la,\mathrm{Ch}^{t'}_{\la'}\rangle_{H_M}&=6M^2h_M(\la)\left(3\sum_{k=0}^2\mu_\la^{t,k}\overline{\mu_{\la'}^{t',k}}-\sum_{k,l=0}^2\mu_\la^{t,k}\overline{\mu_{\la'}^{t',l}}\right)\delta_{\la\la'} \nonumber \\
&=\begin{cases}
12M^2h_M(\la)\mu^t(\la)\delta_{\la\la'},&t=t',\\
6M^2h_M(\la)\beta(\la)\delta_{\la\la'},&t\neq t'.
\end{cases} \label{res}
\end{align}
Conditions \eqref{cond1} for the honeycomb $C-$functions $\mathrm{Ch}^\pm_\la$, $\la\in L_M$ and \eqref{res}  then guarantee the discrete orthogonality relations \eqref{hdis1} and \eqref{hdis2}. According to counting relation \eqref{numL}, the number of the orthogonal functions $\mathrm{Ch}^\pm_\la$, $\la\in L_M$ coincides with the cardinality of $H_M$ and therefore,
these functions form an orthogonal basis of $\mathcal{H}_M$. 
\end{proof}

Note that the result of the scalar product \eqref{res} grants that the normalization functions \eqref{mu} are for any extension coefficients $\mu_\la^{\pm,0},\mu_\la^{\pm,1},\mu_\la^{\pm,2}\in \C$ always non-negative    
\begin{equation}\label{pos}
\mu^\pm(\lambda)\geq 0,\q \la \in L_M.
\end{equation}
Similarly, the following Hartley version of Theorem \ref{thmdis} is deduced.
\begin{thm}\label{thmdish}
Any set of the honeycomb Hartley $C-$functions $\mathrm{Cah}_\la^\pm$, $\la\in L_M$, restricted to $H_M$, forms an orthogonal basis of the space $\mathcal{H}_M$. For any $\la,\la'\in L_M$ it holds that
\begin{align*}
&\langle \mathrm{Cah}_\la^\pm,\mathrm{Cah}_{\la'}^{\pm} \rangle_{H_M}=12M^2h_M(\la)\mu^\pm(\la)\delta_{\la\la'},\\
&\langle \mathrm{Cah}_\la^+,\mathrm{Cah}_{\la'}^{-} \rangle_{H_M}=0.
\end{align*}
\end{thm}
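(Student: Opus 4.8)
The strategy is to run, \emph{mutatis mutandis}, the argument already used to prove Theorem~\ref{thmdis}, replacing the $C$-functions $\Phi_\la$ by the Hartley $C$-functions $\hart^\id_\la$ everywhere. First I would use the point set decomposition \eqref{HM} to split, for $t,t'\in\{+,-\}$,
\[
\langle \mathrm{Cah}^t_\la,\mathrm{Cah}^{t'}_{\la'}\rangle_{H_M}=\langle \mathrm{Cah}^t_\la,\mathrm{Cah}^{t'}_{\la'}\rangle_{F_{P,M}}-\langle \mathrm{Cah}^t_\la,\mathrm{Cah}^{t'}_{\la'}\rangle_{F_{Q,M}},
\]
and then expand both extended Hartley $C$-functions by their definition \eqref{extHarC}, reducing the computation to the bilinear combination of $\langle\hart^\id_{\gamma_k\la},\hart^\id_{\gamma_l\la'}\rangle_{F_{P,M}}$ and $\langle\hart^\id_{\gamma_k\la},\hart^\id_{\gamma_l\la'}\rangle_{F_{Q,M}}$ with coefficients $\mu_\la^{t,k}\overline{\mu_{\la'}^{t',l}}$.

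For the $F_{P,M}$ contribution, $\gamma_k\la,\gamma_l\la'\in\Lambda_{Q,M}$ by \eqref{gam}, so the discrete orthogonality of the Hartley $C$-functions in \eqref{ort1} together with the $\Gamma_M$-invariance \eqref{invgam} of $h_M$ applies; then the fact that $F_P$ is a fundamental domain of $W_P^{\mathrm{aff}}$, that $\la,\la'\in L_M$ are not $\Gamma_M$-stabilized by \eqref{defLM}, and that $\Gamma_M$ is cyclic of prime order force $\gamma_k=\gamma_l$, exactly as in the proof of Theorem~\ref{thmdis}, giving $\langle\hart^\id_{\gamma_k\la},\hart^\id_{\gamma_l\la'}\rangle_{F_{P,M}}=18M^2h_M(\la)\delta_{\la\la'}\delta_{kl}$.

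The only step requiring an argument genuinely different from the $C$-function case is the analogue of \eqref{iddd}: I would show $\hart^\id_{\gamma_k\la}(s)=\hart^\id_\la(s)$ for all $s\in F_{Q,M}$. Writing $w\gamma_k\la=w(r_1r_2)^k\la+Mw\om_k$ as in \eqref{scash} and using $\langle Mw\om_k,s\rangle\in\Z$ (from the $\Z$-duality \eqref{Zdual} and $W$-invariance of $P$), the $2\pi$-periodicity of the kernel $\mathrm{cas}$ from \eqref{cas} absorbs the integer part of each summand; relabelling the Weyl-group sum by right multiplication with $(r_1r_2)^k\in W$ then collapses it to $\hart^\id_\la(s)$. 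Consequently the discrete orthogonality \eqref{ort3} of the Hartley $C$-functions on $F_{Q,M}$ yields $\langle\hart^\id_{\gamma_k\la},\hart^\id_{\gamma_l\la'}\rangle_{F_{Q,M}}=6M^2h_M(\la)\delta_{\la\la'}$, where $d(\la)=1$ since $\la\in L_M$ is not of the form $[\la_0,\la_0,\la_0]$.

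Plugging these two evaluations into the expansion reproduces formula \eqref{res} with $\mathrm{Ch}$ replaced by $\mathrm{Cah}$; imposing the honeycomb conditions \eqref{cond1} annihilates the $t\neq t'$ part and turns the diagonal into $12M^2h_M(\la)\mu^\pm(\la)$. Since $h_M(\la)\ge 1$ and $\mu^\pm(\la)>0$, the $2\abs{L_M}=\abs{H_M}$ functions $\mathrm{Cah}^\pm_\la$, $\la\in L_M$, are nonzero and pairwise orthogonal, so by the count \eqref{numL} they form an orthogonal basis of $\mathcal{H}_M$. I do not anticipate any real obstacle here; the single subtlety is the periodicity-then-reindexation manoeuvre for $\mathrm{cas}$, needed because the Hartley kernel, unlike the exponential, is not multiplicative, so the integer shift $\langle Mw\om_k,s\rangle$ must be discarded additively before the Weyl sum is re-indexed.
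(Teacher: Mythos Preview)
Your proposal is correct and follows exactly the approach the paper intends: the paper itself gives no separate proof, stating only that ``Similarly, the following Hartley version of Theorem~\ref{thmdis} is deduced,'' and your write-up supplies precisely the details of that parallel argument. Your remark that the $\mathrm{cas}$ kernel requires $2\pi$-periodicity plus Weyl-group reindexing (rather than multiplicativity) to obtain the Hartley analogue of \eqref{iddd} is the one genuine adaptation needed, and you handle it correctly.
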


For any two complex discrete functions $f,g:\wt H_M\map \C$, a scalar product on the interior fragment of the honeycomb lattice \eqref{intHM} is defined as
\begin{equation*}
\langle f,g\rangle_{\wt{H}_M}=6\sum_{s\in \wt{H}_M}f(s)\overline{g(s)},
\end{equation*}
and the resulting finite-dimensional Hilbert space of complex valued functions is denoted by $\wt{\mathcal{H}}_M$. As in Theorems \ref{thmdis} and \ref{thmdish}, the discrete orthogonality of the honeycomb $
S-$functions and the honeycomb Hartley $
S-$functions is obtained.
\begin{thm}\label{thmdis2}
Any set of the honeycomb $S-$functions $\mathrm{Sh}_\la^\pm$, $\la\in \wt L_M$, restricted to $\wt H_M$, forms an orthogonal basis of the space $\wt{\mathcal{H}}_M$. Any set of the honeycomb Hartley $S-$functions $\mathrm{Sah}_\la^\pm$, $\la\in \wt L_M$, restricted to $\wt H_M$, forms an orthogonal basis of the space $\wt{\mathcal{H}}_M$. For any $\la,\la'\in \wt L_M$ it holds that
\begin{align*}
&\langle \mathrm{Sh}_\la^\pm,\mathrm{Sh}_{\la'}^\pm \rangle_{\wt{H}_M}=\langle \mathrm{Sah}_\la^\pm,\mathrm{Sah}_{\la'}^\pm \rangle_{\wt{H}_M}=12M^2\mu^\pm(\la)\delta_{\la\la'},\\
&\langle \mathrm{Sh}_\la^+,\mathrm{Sh}_{\la'}^- \rangle_{\wt{H}_M}=\langle \mathrm{Sah}_\la^+,\mathrm{Sah}_{\la'}^- \rangle_{\wt{H}_M}=0.
\end{align*}
\end{thm}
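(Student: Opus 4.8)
The plan is to transcribe the proof of Theorem~\ref{thmdis} almost verbatim, interchanging the $C$-functions $\Phi$ with the $S$-functions $\phi$, the point sets $F_{P,M},F_{Q,M},H_M$ with their interiors $\wt F_{P,M},\wt F_{Q,M},\wt H_M$, and the orthogonality constants accordingly; the statement for the Hartley functions $\mathrm{Sah}^\pm_\la$ then follows by repeating the identical steps with $\hart^e$ in place of $\phi$. Fix $t,t'\in\{+,-\}$. Since $\wt H_M=\wt F_{P,M}\setminus\wt F_{Q,M}$ with $\wt F_{Q,M}\subset\wt F_{P,M}$, and since all three scalar products on these interior sets carry the same constant weight~$6$, one has
\begin{equation*}
\langle\mathrm{Sh}^t_\la,\mathrm{Sh}^{t'}_{\la'}\rangle_{\wt H_M}=\langle\mathrm{Sh}^t_\la,\mathrm{Sh}^{t'}_{\la'}\rangle_{\wt F_{P,M}}-\langle\mathrm{Sh}^t_\la,\mathrm{Sh}^{t'}_{\la'}\rangle_{\wt F_{Q,M}}.
\end{equation*}
Substituting definition~\eqref{def2} expands the right-hand side into a double sum $\sum_{k,l}\mu^{t,k}_\la\overline{\mu^{t',l}_{\la'}}\bigl(\langle\phi_{\gamma_k\la},\phi_{\gamma_l\la'}\rangle_{\wt F_{P,M}}-\langle\phi_{\gamma_k\la},\phi_{\gamma_l\la'}\rangle_{\wt F_{Q,M}}\bigr)$, and it remains to evaluate the two inner products.

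For the $\wt F_{P,M}$-term, since by~\eqref{cycl} each $\gamma_k\la$ is a cyclic permutation of the coordinates of $\la\in\wt L_M$ (all positive, summing to $M$), every $\gamma_k\la,\gamma_l\la'$ lies in $\wt\Lambda_{Q,M}$, so the $S$-function discrete orthogonality on $\wt F_{P,M}$ applies and gives $\langle\phi_{\gamma_k\la},\phi_{\gamma_l\la'}\rangle_{\wt F_{P,M}}=18M^2\delta_{\gamma_k\la,\gamma_l\la'}$. Exactly as in Theorem~\ref{thmdis} --- using that $F_P$ is a fundamental domain of $W^{\mathrm{aff}}_P$, that $\wt L_M\subset\wt\Lambda_{P,M}$, and that $\Gamma_M$ is cyclic of prime order so that a point of $\wt L_M$ has trivial $\Gamma_M$-stabilizer --- the equality $\gamma_k\la=\gamma_l\la'$ forces $\la=\la'$ and $\gamma_k=\gamma_l$, whence $\langle\phi_{\gamma_k\la},\phi_{\gamma_l\la'}\rangle_{\wt F_{P,M}}=18M^2\delta_{\la\la'}\delta_{kl}$.

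For the $\wt F_{Q,M}$-term, I would reprove the analogue of~\eqref{iddd}. By~\eqref{gammaexp}, $\langle w\gamma_k\la,s\rangle=\langle w(r_1r_2)^k\la,s\rangle+\langle Mw\om_k,s\rangle$ for every $w\in W$, and for $s\in\wt F_{Q,M}\subset\tfrac1M Q$ the second summand is an integer by the $\Z$-duality~\eqref{Zdual} of $P$ and $Q$ together with the $W$-invariance of $Q$. Since $e^{2\pi\i(\cdot)}$ and $\mathrm{cas}$ are $2\pi$-periodic and, crucially, $\det(r_1r_2)=1$, reindexing $w\mapsto w(r_1r_2)^k$ in the defining sums shows $\phi_{\gamma_k\la}(s)=\phi_\la(s)$ and $\hart^e_{\gamma_k\la}(s)=\hart^e_\la(s)$ on $\wt F_{Q,M}$. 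Combined with the $S$-function orthogonality $\langle\phi_\la,\phi_{\la'}\rangle_{\wt F_{Q,M}}=6M^2d(\la)\delta_{\la\la'}$ and the observation that $d(\la)=1$ for every $\la\in\wt L_M$ --- such $\la$ being not fixed by $\Gamma_M$, hence $\la_0,\la_1,\la_2$ not all equal --- this yields $\langle\phi_{\gamma_k\la},\phi_{\gamma_l\la'}\rangle_{\wt F_{Q,M}}=6M^2\delta_{\la\la'}$.

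Inserting the two evaluations back collapses the double sum to
\begin{equation*}
\langle\mathrm{Sh}^t_\la,\mathrm{Sh}^{t'}_{\la'}\rangle_{\wt H_M}=6M^2\delta_{\la\la'}\Bigl(3\sum_{k}\mu^{t,k}_\la\overline{\mu^{t',k}_{\la}}-\sum_{k,l}\mu^{t,k}_\la\overline{\mu^{t',l}_{\la}}\Bigr),
\end{equation*}
which, directly from the definitions~\eqref{mu} and~\eqref{beta}, equals $12M^2\mu^t(\la)\delta_{\la\la'}$ when $t=t'$ and $6M^2\beta(\la)\delta_{\la\la'}$ when $t\neq t'$; imposing conditions~\eqref{cond2} then produces the asserted orthogonality relations and strictly positive norms, and the verbatim computation with $\hart^e$ throughout settles the $\mathrm{Sah}^\pm_\la$ case. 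Finally, for each fixed sign the family $\{\mathrm{Sh}^t_\la:\la\in\wt L_M\}$ has $|\wt L_M|$ members, so the $\mathrm{Sh}^+_\la$ and $\mathrm{Sh}^-_\la$ together form $2|\wt L_M|=|\wt H_M|=\dim\wt{\mathcal H}_M$ mutually orthogonal nonzero vectors, hence an orthogonal basis of $\wt{\mathcal H}_M$; likewise for $\{\mathrm{Sah}^\pm_\la\}$. I do not expect a genuine obstacle: the argument is structurally identical to that of Theorem~\ref{thmdis}, and the only points that demand care are the sign bookkeeping $\det(r_1r_2)=1$ in the reindexing on $\wt F_{Q,M}$ (invisible in the $C$-case of~\eqref{iddd}) and the simplification $d(\la)=1$, valid precisely because $\wt L_M$ omits the $\Gamma_M$-fixed weight.
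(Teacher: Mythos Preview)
Your proof is correct and follows exactly the approach the paper intends: the paper itself provides no separate proof for Theorem~\ref{thmdis2}, merely stating that it is obtained ``as in Theorems~\ref{thmdis} and~\ref{thmdish}.'' You have correctly carried out that analogy, and in fact you have been more explicit than the paper by flagging the two points where the $S$-case genuinely differs from the $C$-case --- the sign $\det(r_1r_2)=1$ needed for the label-shift identity $\phi_{\gamma_k\la}(s)=\phi_\la(s)$ on $\wt F_{Q,M}$, and the simplification $d(\la)=1$ valid because $\wt L_M$ excludes the $\Gamma_M$-fixed weight.
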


\section{Three types of honeycomb $C-$ and $S-$functions}

\subsection{Type $\mathrm{I}$}\

The first type of the honeycomb $C-$ and $S-$functions and their Hartley versions is characterized by real common values, independent of $\la\in L_M$, of the extension coefficients $\mu^{\pm,k}_\la$.
One of the simplest choices of the values $\mu^{\pm,k}_\la$ satisfying the conditions \eqref{cond1} and \eqref{cond2} is
\begin{equation}\label{typeI}
\begin{aligned}
&\left(\mu^{+,0}_\la,\mu^{+,1}_\la,\mu^{+,2}_\la\right)=(1,0,0),\\
&\left(\mu^{-,0}_\la,\mu^{-,1}_\la,\mu^{-,2}_\la\right)=(0,1,-1).
\end{aligned}
\end{equation}
The intertwining function \eqref{beta} indeed vanishes and the normalization functions \eqref{mu} have constant values 
$$\mu^+(\la)=1,\,\q\mu^-(\la)=3, \q \la \in L_M.$$

Given $\la=[\la_0,\la_1,\la_2]=\la_1\om_1+\la_2\om_2\in L_M$ and $x=x_1\om_1+x_2\om_2$ in the $\om-$basis, the honeycomb $C-$functions $\mathrm{Ch}^{\pm,\mathrm{I}}_\la$ are explicitly given by
\begin{align*}
\mathrm{Ch}^{+,\mathrm{I}}_\la(x)&=e^{\tfrac23\pi\i((2\la_1+\la_2)x_1+(\la_1+2\la_2)x_2)}+e^{\tfrac23\pi\i(-\la_1+\la_2)x_1+(\la_1+2\la_2)x_2)}\\&+e^{\tfrac23\pi\i((-\la_1-2\la_2)x_1+(\la_1-\la_2)x_2)}+e^{\tfrac23\pi\i((-\la_1-2\la_2)x_1+(-2\la_1-\la_2)x_2)}\\&+e^{\tfrac23\pi\i((-\la_1+\la_2)x_1+(-2\la_1-\la_2)x_2)}+e^{\tfrac23\pi\i((2\la_1+\la_2)x_1+(\la_1-\la_2)x_2)},
\end{align*}
\begin{align*}
\mathrm{Ch}^{-,\mathrm{I}}_\la(x)&=e^{\tfrac23\pi\i((2M-\la_1-2\la_2)x_1+(M+\la_1-\la_2)x_2)}+e^{\tfrac23\pi\i(-M+2\la_1+\la_2)x_1+(M+\la_1-\la_2)x_2)}\\&+e^{\tfrac23\pi\i((-M-\la_1+\la_2)x_1+(M-2\la_1-\la_2)x_2)}+e^{\tfrac23\pi\i((-M-\la_1+\la_2)x_1+(-2M+\la_1+2\la_2)x_2)}\\&+e^{\tfrac23\pi\i((-M+2\la_1+\la_2)x_1+(-2M+\la_1+2\la_2)x_2)}+e^{\tfrac23\pi\i((2M-\la_1-2\la_2)x_1+(M-2\la_1-\la_2)x_2)}\\
&-e^{\tfrac23\pi\i((M-\la_1+\la_2)x_1+(2M-2\la_1-\la_2)x_2)}-e^{\tfrac23\pi\i(M-\la_1-2\la_2)x_1+(2M-2\la_1-\la_2)x_2)}\\&-e^{\tfrac23\pi\i((-2M+2\la_1+\la_2)x_1+(-M+\la_1+2\la_2)x_2)}-e^{\tfrac23\pi\i((-2M+2\la_1+\la_2)x_1+(-M+\la_1-\la_2)x_2)}\\&-e^{\tfrac23\pi\i((M-\la_1-2\la_2)x_1+(-M+\la_1-\la_2)x_2)}-e^{\tfrac23\pi\i((M-\la_1+\la_2)x_1+(-M+\la_1+2\la_2)x_2)},
\end{align*}
and the honeycomb $S-$functions $\mathrm{Sh}^{\pm,\mathrm{I}}_\la$ are of the following form,
\begin{align*}
\mathrm{Sh}^{+,\mathrm{I}}_\la(x)&=e^{\tfrac23\pi\i((2\la_1+\la_2)x_1+(\la_1+2\la_2)x_2)}-e^{\tfrac23\pi\i(-\la_1+\la_2)x_1+(\la_1+2\la_2)x_2)}\\&+e^{\tfrac23\pi\i((-\la_1-2\la_2)x_1+(\la_1-\la_2)x_2)}-e^{\tfrac23\pi\i((-\la_1-2\la_2)x_1+(-2\la_1-\la_2)x_2)}\\&+e^{\tfrac23\pi\i((-\la_1+\la_2)x_1+(-2\la_1-\la_2)x_2)}-e^{\tfrac23\pi\i((2\la_1+\la_2)x_1+(\la_1-\la_2)x_2)},
\end{align*}
\begin{align*}
\mathrm{Sh}^{-,\mathrm{I}}_\la(x)&=e^{\tfrac23\pi\i((2M-\la_1-2\la_2)x_1+(M+\la_1-\la_2)x_2)}-e^{\tfrac23\pi\i(-M+2\la_1+\la_2)x_1+(M+\la_1-\la_2)x_2)}\\&+e^{\tfrac23\pi\i((-M-\la_1+\la_2)x_1+(M-2\la_1-\la_2)x_2)}-e^{\tfrac23\pi\i((-M-\la_1+\la_2)x_1+(-2M+\la_1+2\la_2)x_2)}\\&+e^{\tfrac23\pi\i((-M+2\la_1+\la_2)x_1+(-2M+\la_1+2\la_2)x_2)}-e^{\tfrac23\pi\i((2M-\la_1-2\la_2)x_1+(M-2\la_1-\la_2)x_2)}\\
&-e^{\tfrac23\pi\i((M-\la_1+\la_2)x_1+(2M-2\la_1-\la_2)x_2)}+e^{\tfrac23\pi\i(M-\la_1-2\la_2)x_1+(2M-2\la_1-\la_2)x_2)}\\&-e^{\tfrac23\pi\i((-2M+2\la_1+\la_2)x_1+(-M+\la_1+2\la_2)x_2)}+e^{\tfrac23\pi\i((-2M+2\la_1+\la_2)x_1+(-M+\la_1-\la_2)x_2)}\\&-e^{\tfrac23\pi\i((M-\la_1-2\la_2)x_1+(-M+\la_1-\la_2)x_2)}+e^{\tfrac23\pi\i((M-\la_1+\la_2)x_1+(-M+\la_1+2\la_2)x_2)}.
\end{align*}

Explicit formulas for the honeycomb Hartley $C-$ and $S-$functions $\mathrm{Cah}^{\pm,\mathrm{I}}_\la$ and $\mathrm{Sah}^{\pm,\mathrm{I}}_\la$ are obtained directly by replacing exponential functions with Hartley kernel functions \eqref{cas}. 
The contour plots the honeycomb Hartley functions $\mathrm{Cah}^{\pm,\mathrm{I}}_\la$ and $\mathrm{Sah}^{\pm,\mathrm{I}}_\la$ are depicted in Figures \ref{CahI} and \ref{SahI}, respectively.

\begin{figure}
\includegraphics{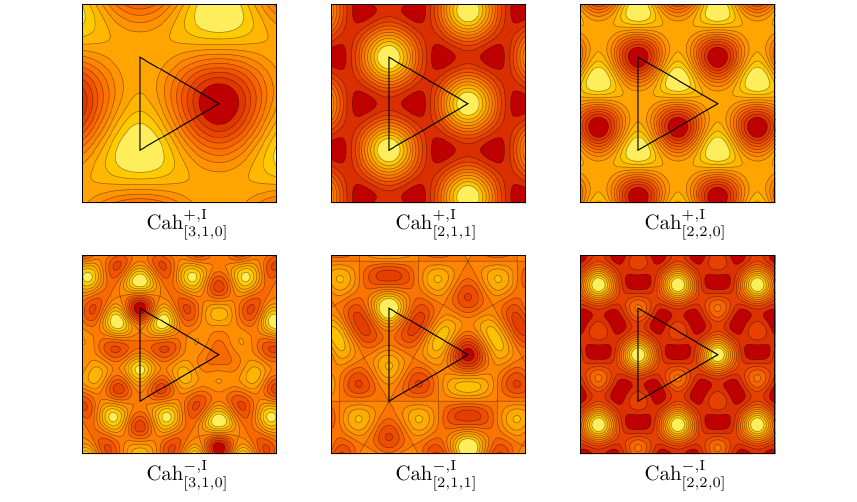}
\caption{{\small The contour plots of the honeycomb Hartley $C-$functions $\mathrm{Cah}^{\pm,\mathrm{I}}_\la$ with $M=4$. The triangle depicts the fundamental domain $F_Q$.}}
\label{CahI}
\end{figure}

\begin{figure}
\includegraphics{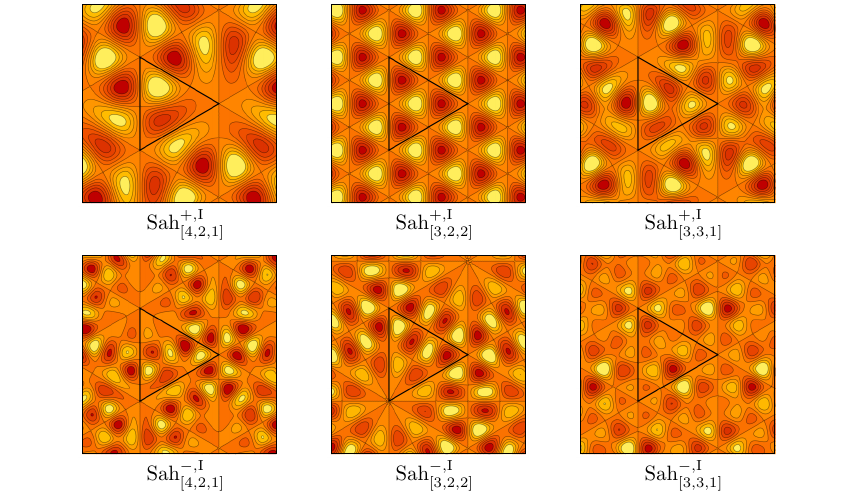}
\caption{{\small The contour plots of the honeycomb Hartley $S-$functions $\mathrm{Sah}^{\pm,\mathrm{I}}_\la$ with $M=7$. The triangle depicts the fundamental domain $F_Q$.}}
\label{SahI}
\end{figure}

\subsection{Type $\mathrm{II}$}\

The type $\mathrm{II}$ is characterized by non-constant real values of the extension coefficients $\mu^{\pm,k}_\la$. By selecting a special case, the values of $\mu^{\pm,k}_\la$ are specified as
\begin{equation}\label{typeII}
\begin{aligned}
&\mu^{\pm,0}_\la=\mathrm{Re}\left\{(3+\sqrt{3}\,\i)\Phi_\la\left(\tfrac{\om_1}{M}\right)\right\},\\
&\mu^{\pm,1}_\la=0,\\
&\mu^{\pm,2}_\la=\mathrm{Re}\left\{(3-\sqrt{3}\,\i)\Phi_\la\left(\tfrac{\om_1}{M}\right)\right\}\pm 3\abs{\Phi_\la\left(\tfrac{\om_1}{M}\right)}.
\end{aligned}
\end{equation}
The intertwining function \eqref{beta} vanishes and the normalization functions \eqref{mu} are calculated as 
\begin{align}\label{mutypeII}
\mu^\pm(\la)&=9\abs{\Phi_\la\left(\tfrac{\om_1}{M}\right)}\left(2\abs{\Phi_\la\left(\tfrac{\om_1}{M}\right)}\pm\mathrm{Re}\left\{(1-\sqrt{3}\,\i)\Phi_\la\left(\tfrac{\om_1}{M}\right)\right\}\right).
\end{align}
To verify the positivity of the normalization functions \eqref{mutypeII} in conditions \eqref{cond1} and \eqref{cond2}, the inequality $\mu^+(\la)\mu^-(\la)\neq 0$ is proven. Substituting coefficients \eqref{typeII} into defining relations \eqref{mu} yields  
\begin{equation}\label{product}	\mu^+(\la)\mu^-(\la)=81\abs{\Phi_\la\left(\tfrac{\om_1}{M}\right)}^2\left(\mathrm{Im}\left\{\Phi_\la\left(\tfrac{\om_1}{M}\right)\right\}-\sqrt{3}\mathrm{Re}\left\{\Phi_\la\left(\tfrac{\om_1}{M}\right)\right\}\right)^2.
\end{equation}
Firstly, the explicit form of $C-$functions \eqref{Cexp} produces for $\la=\la_1\om_1+\la_2\om_2$ in $\om-$basis the expression
$$|\Phi_\la\left(\tfrac{\om_1}{M}\right)|^2=4\left[2\cos{\left(\tfrac{\pi}{M}(\la_1+\la_2)\right)}+\cos{\left(\tfrac{\pi}{M}(\la_1-\la_2)\right)}\right]^2+4\sin^2{\left(\tfrac{\pi}{M}(\la_1-\la_2)\right)}.$$
Standard trigonometric identities guarantee that the system of equations
\begin{align*}
&2\cos{\left(\tfrac{\pi}{M}(\la_1+\la_2)\right)}+\cos{\left(\tfrac{\pi}{M}(\la_1-\la_2)\right)}=0,\\
&\sin{\left(\tfrac{\pi}{M}(\la_1-\la_2)\right)}=0,
\end{align*}  
has no solution for $\la\in L_M$ and hence,
$$|\Phi_\la\left(\tfrac{\om_1}{M}\right)|^2\neq0,\q \la\in L_M.$$
Secondly, the explicit form of $C-$functions \eqref{Cexp} provides the relation
\begin{align*}
\mathrm{Im}\left\{\Phi_\la\left(\tfrac{\om_1}{M}\right)\right\}-\sqrt{3}\mathrm{Re}\left\{\Phi_\la\left(\tfrac{\om_1}{M}\right)\right\}
=&-16\cos{\left(\tfrac{\pi}{3M}(2\la_1+\la_2)-\tfrac{\pi}{6}\right)}\cdot \cos{\left(\tfrac{\pi}{3M}(\la_1-\la_2)+\tfrac{\pi}{6}\right)} \\ & \cdot \cos{\left(\tfrac{\pi}{3M}(\la_1+2\la_2)+\tfrac{\pi}{6}\right)}.
\end{align*}
Since for all $\la\in L_M$ it holds that
\begin{align*}
&\cos{\left(\tfrac{\pi}{3M}(2\la_1+\la_2)-\tfrac{\pi}{6}\right)}\neq0,\\
&\cos{\left(\tfrac{\pi}{3M}(\la_1-\la_2)+\tfrac{\pi}{6}\right)}\neq0,\\
&\cos{\left(\tfrac{\pi}{3M}(\la_1+2\la_2)+\tfrac{\pi}{6}\right)}\neq0,
\end{align*}
the product \eqref{product} is non-zero and hence,
\begin{equation}\label{nonzero}
\mu^\pm(\la)\neq0,\q\la\in L_M.
\end{equation}
Positivity of the normalization functions \eqref{pos} and property \eqref{nonzero} then imply the validity of conditions \eqref{cond1} and~\eqref{cond2}. 
The contour plots the honeycomb Hartley functions $\mathrm{Cah}^{\pm,\mathrm{II}}_\la$ and $\mathrm{Sah}^{\pm,\mathrm{II}}_\la$ are depicted in Figures \ref{CahII} and \ref{SahII}, respectively.
\begin{figure}
\includegraphics{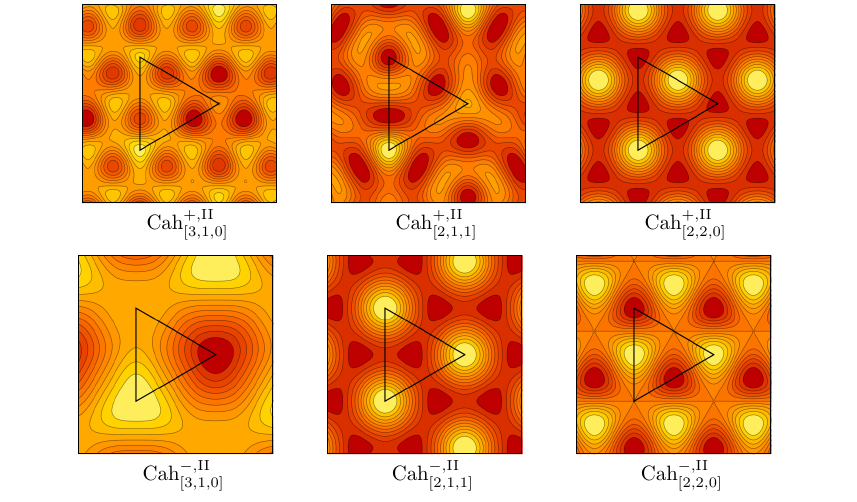}
\caption{{\small The contour plots of the honeycomb Hartley $C-$functions $\mathrm{Cah}^{\pm,\mathrm{II}}_\la$ with $M=4$. The triangle depicts the fundamental domain $F_Q$.}}
\label{CahII}
\end{figure}

\begin{figure}
\includegraphics{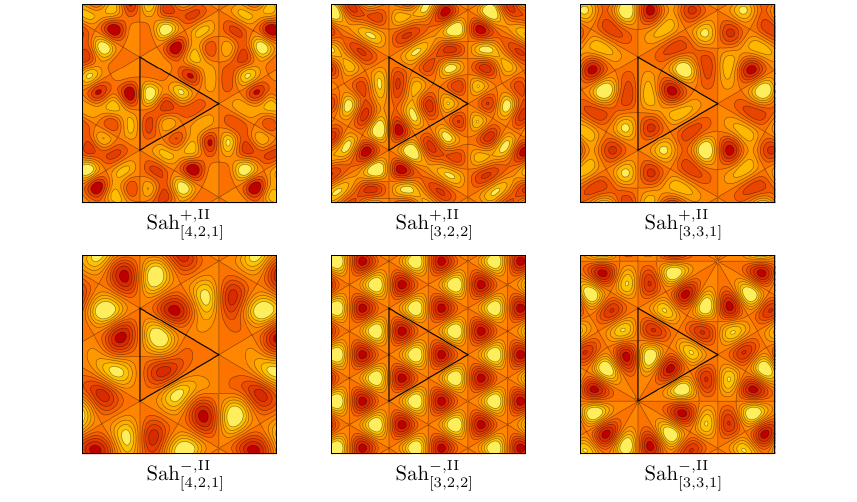}
\caption{{\small The contour plots of the honeycomb Hartley $S-$functions $\mathrm{Sah}^{\pm,\mathrm{II}}_\la$ with $M=7$. The triangle depicts the fundamental domain $F_Q$.}}
\label{SahII}
\end{figure}

\subsection{Type $\mathrm{III}$}\

The third type of the honeycomb $C-$ and $S-$functions is characterized by a common value of the extension coefficients $\mu^{\pm,k}_\la,$ independent of $\la\in L_M$ and with non-zero imaginary part.
One of the simplest choices of the complex values $\mu^{\pm,k}_\la$ of extension constants satisfying the conditions \eqref{cond1} and \eqref{cond2} is
\begin{equation}\label{mu3}
\begin{aligned}
&\left(\mu^{+,0}_\la,\mu^{+,1}_\la,\mu^{+,2}_\la\right)=\left(1,e^{\tfrac{2\pi\i}{3}},e^{-\tfrac{2\pi\i}{3}}\right),\\
&\left(\mu^{-,0}_\la,\mu^{-,1}_\la,\mu^{-,2}_\la\right)=\left(1,e^{-\tfrac{2\pi\i}{3}},e^{\tfrac{2\pi\i}{3}}\right).
\end{aligned}
\end{equation}
The intertwining function \eqref{beta} vanishes and the normalization functions \eqref{mu} are calculated as 
$$\mu^\pm(\la)=\tfrac92,\q \la\in L_M.$$

Since the Weyl group $W$ is generated by reflections \eqref{ri}, the $\Z-$duality relation \eqref{Zdual} guarantees for any $w\in W$ that
\begin{equation*}
w\om_k\in \om_k+Q, \q k\in \{1,2\},
\end{equation*} 
and hence, for discrete values $x\in \tfrac{1}{M}(\om_j+Q)$ it holds that
$$M\langle w\om_k,x\rangle\in \langle\om_k,\om_j\rangle+\Z,\q j,k\in \{1,2\}.$$ Therefore,
the extended $C-$functions \eqref{def1} are due to \eqref{scaom} and \eqref{scash} evaluated on the decomposition \eqref{HMdis} as
\begin{equation}\label{hod1}
\Phi^\pm_\la(x)=\begin{cases}
\left(\mu^{\pm,0}_\la+\mu^{\pm,1}_\la e^{-\tfrac{2\pi\i}{3}}+\mu^{\pm,2}_\la e^{\tfrac{2\pi\i}{3}}\right)\Phi_\la(x),&x\in H_M^{(1)},\\
\left(\mu^{\pm,0}_\la+\mu^{\pm,1}_\la e^{\tfrac{2\pi\i}{3}}+\mu^{\pm,2}_\la e^{-\tfrac{2\pi\i}{3}}\right)\Phi_\la(x),&x\in H_M^{(2)}.
\end{cases}
\end{equation}
Setting the values \eqref{mu3} in formula \eqref{hod1} yields for the honeycomb $C-$functions of type III relations
\begin{equation}\label{CHIII}
\mathrm{Ch}^{+,\mathrm{III}}_\la(x)=\begin{cases}
3\Phi_\la(x)&x\in H_M^{(1)},\\
0&x\in H_M^{(2)},
\end{cases}\q 
\mathrm{Ch}^{-,\mathrm{III}}_\la(x)=\begin{cases}
0&x\in H_M^{(1)},\\
3\Phi_\la(x)&x\in H_M^{(2)}.
\end{cases}
\end{equation}
Formula \eqref{iddd} determines the values of the honeycomb $C-$functions of type III on the point set \eqref{FQM} as
\begin{equation}\label{CHIII0}
\mathrm{Ch}^{+,\mathrm{III}}_\la(x)=	\mathrm{Ch}^{-,\mathrm{III}}_\la(x)=0, \q x\in F_{Q,M}.
\end{equation}
 
Relations \eqref{CHIII} is for the honeycomb $S-$functions of similar form,
\begin{equation*}
\mathrm{Sh}^{+,\mathrm{III}}_\la(x)=\begin{cases}
3\phi_\la(x)&x\in H_M^{(1)},\\
0&x\in H_M^{(2)},
\end{cases}\q 
\mathrm{Sh}^{-,\mathrm{III}}_\la(x)=\begin{cases}
0&x\in H_M^{(1)},\\
3\phi_\la(x)&x\in H_M^{(2)},
\end{cases}
\end{equation*}
and formula \eqref{CHIII0} becomes
 \begin{equation*}
\mathrm{Sh}^{+,\mathrm{III}}_\la(x)=	\mathrm{Sh}^{-,\mathrm{III}}_\la(x)=0, \q x\in  F_{Q,M}.
\end{equation*}

The contour plots of real and imaginary parts of the honeycomb $C-$functions $\mathrm{Ch}^{+,\mathrm{III}}_\la$ are depicted in Figure \ref{ChIII}. The contour plots of real and imaginary parts of the honeycomb $S-$functions $\mathrm{Sh}^{+,\mathrm{III}}_\la$ are depicted in Figure \ref{ShIII}. 
\begin{figure}
\includegraphics{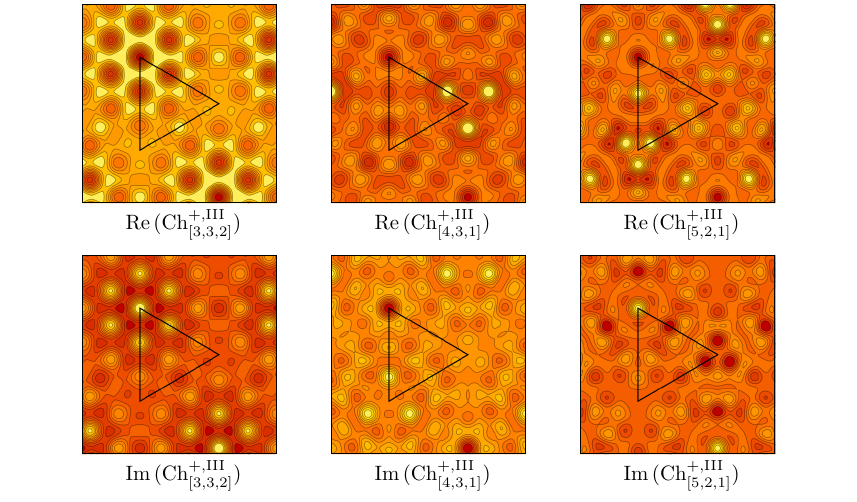}
\caption{{\small The contour plots of the honeycomb $C-$functions $\mathrm{Ch}^{+,\mathrm{III}}_\la$ with $M=8$. The triangle depicts the fundamental domain $F_Q$.}}
\label{ChIII}
\end{figure}

\begin{figure}
\includegraphics{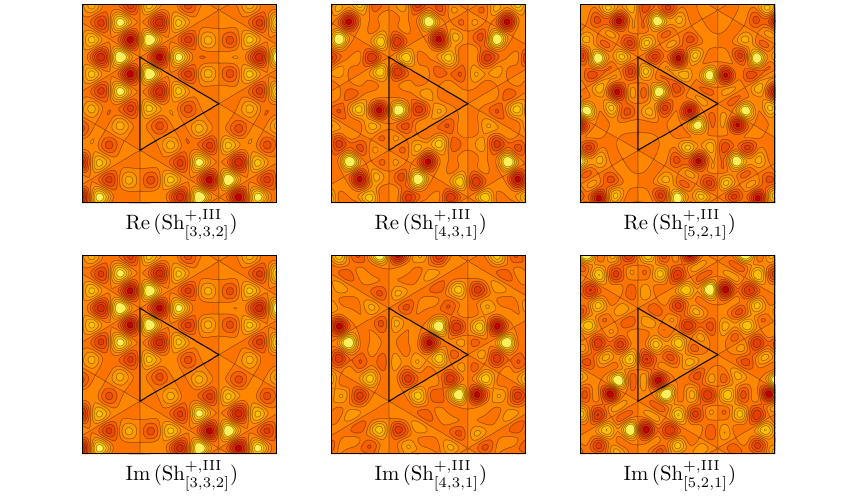}
\caption{{\small The contour plots of the honeycomb $S-$functions $\mathrm{Sh}^{+,\mathrm{III}}_\la$ with $M=8$. The triangle depicts the fundamental domain $F_Q$.}}
\label{ShIII}
\end{figure}

\section{Discrete honeycomb lattice transforms}

\subsection{Four types of discrete transforms}\

Two interpolating functions $\mathrm{I}[f]_M:\R^2\map \C$ and $\mathrm{Ih}[f]_M:\R^2\map \C$ of any function sampled on the honeycomb lattice fragment $f\in \mathcal{H}_M$ are defined as linear combinations of the honeycomb and Hartley honeycomb $C-$functions,
\begin{align}
\mathrm{I}[f]_M(x)=& \sum_{\la\in L_M} \left(c^+_\la \mathrm{Ch}^+_\la(x) + c^-_\la \mathrm{Ch}_\la^-(x)\right),\label{intc} \\
\mathrm{Ih}[f]_M(x)=& \sum_{\la\in L_M} \left(d^+_\la \mathrm{Cah}^+_\la(x) + d^-_\la \mathrm{Cah}_\la^-(x)\right), \label{hintc}
\end{align}
that coincide with the function $f$ on the interpolation nodes,
\begin{align*}
\mathrm{I}[f]_M(s)=& f(s), \q s\in H_M, \\
\mathrm{Ih}[f]_M(s)=& f(s), \q s\in H_M. 
\end{align*}
The frequency spectrum coefficients $c^\pm_\la$ and $d^\pm_\la$ are uniquely determined by Theorems \ref{thmdis} and \ref{thmdish} and calculated as the standard Fourier coefficients,
\begin{align}
c^\pm_\la=& \frac{\sca{f}{\mathrm{Ch}^\pm_\la}_{H_M}}{\sca{\mathrm{Ch}^\pm_\la}{\mathrm{Ch}^\pm_\la}_{H_M}}=(12M^2h_M(\la)\mu^\pm(\la))^{-1}\sum_{s\in H_M}\ep(s) f(s)\overline{\mathrm{Ch}^\pm_\la(s)},\label{ctrans}\\
d^\pm_\la=& \frac{\sca{f}{\mathrm{Cah}^\pm_\la}_{H_M}}{\sca{\mathrm{Cah}^\pm_\la}{\mathrm{Cah}^\pm_\la}_{H_M}}=(12M^2h_M(\la)\mu^\pm(\la))^{-1}\sum_{s\in H_M}\ep(s) f(s)\overline{\mathrm{Cah}^\pm_\la(s)}.\label{hctrans}
\end{align}
The corresponding Plancherel formulas are also valid,
\begin{align*}
\sum_{s\in H_M}\ep(s)\abs{f(s)}^2&=12M^2\sum_{\la\in L_M}h_M(\la)\left(\mu^+(\la)\abs{c^+_\lambda}^2+\mu^-(\la)\abs{c^-_\lambda}^2\right),\\
\sum_{s\in H_M}\ep(s)\abs{f(s)}^2&=12M^2\sum_{\la\in L_M}h_M(\la)\left(\mu^+(\la)\abs{d^+_\lambda}^2+\mu^-(\la)\abs{d^-_\lambda}^2\right).
\end{align*}
Formulas \eqref{ctrans}, \eqref{hctrans} and \eqref{intc}, \eqref{hintc} provide forward and backward Fourier-Weyl and Hartley-Weyl honeycomb $C-$transforms, respectively.

Two interpolating functions $\mathrm{\wt{I}}[f]_M:\R^2\map \C$ and $\mathrm{\wt{I}h}[f]_M:\R^2\map \C$ of any function sampled on the interior honeycomb lattice fragment $\wt{\mathcal{H}}_M$ are defined as linear combinations of the honeycomb and Hartley honeycomb $S-$functions,
\begin{align}
\mathrm{\wt{I}}[f]_M(x)=& \sum_{\la\in \wt{L}_M} \left(\wt{c}^+_\la \mathrm{Sh}^+_\la(x) + \wt{c}^-_\la \mathrm{Sh}_\la^-(x)\right), \label{ints} \\
\mathrm{\wt{I}h}[f]_M(x)=& \sum_{\la\in \wt{L}_M} \left(\wt{d}^+_\la \mathrm{Sah}^+_\la(x) + \wt{d}^-_\la \mathrm{Sah}_\la^-(x)\right), \label{hints} 
\end{align}
which coincide with the function $f$ on the interpolation nodes,
\begin{align*}
\mathrm{\mathrm{\wt{I}}}[f]_M(s)=& f(s), \q s\in \wt{H}_M, \\
\mathrm{\wt{I}h}[f]_M(s)=& f(s), \q s\in \wt{H}_M. 
\end{align*}
The frequency spectrum coefficients $\wt c^\pm_\la$ and $\wt d^\pm_\la$ are uniquely determined by Theorem \ref{thmdis2} and calculated as the standard Fourier coefficients,
\begin{align}
\wt{c}^\pm_\la=& \frac{\sca{f}{\mathrm{Sh}^\pm_\la}_{\wt{H}_M}}{\sca{\mathrm{Sh}^\pm_\la}{\mathrm{Sh}^\pm_\la}_{\wt{H}_M}}=(2M^2\mu^\pm(\la))^{-1}\sum_{s\in \wt{H}_M} f(s)\overline{\mathrm{Sh}^\pm_\la(s)},\label{strans}\\
\wt{d}^\pm_\la=& \frac{\sca{f}{\mathrm{Sah}^\pm_\la}_{\wt{H}_M}}{\sca{\mathrm{Sah}^\pm_\la}{\mathrm{Sah}^\pm_\la}_{\wt{H}_M}}=(2M^2\mu^\pm(\la))^{-1}\sum_{s\in \wt{H}_M} f(s)\overline{\mathrm{Sah}^\pm_\la(s)}.\label{hstrans}
\end{align}
The corresponding Plancherel formulas are also valid,
\begin{align*}
\sum_{s\in \wt{H}_M}\abs{f(s)}^2&=2M^2\sum_{\la\in \wt{L}_M}\left(\mu^+(\la)\abs{\wt{c}^+_\lambda}^2+\mu^-(\la)\abs{\wt{c}^-_\lambda}^2\right),\\
\sum_{s\in \wt{H}_M}\abs{f(s)}^2&=2M^2\sum_{\la\in \wt{L}_M}\left(\mu^+(\la)\abs{\wt{d}^+_\lambda}^2+\mu^-(\la)\abs{\wt{d}^-_\lambda}^2\right).
\end{align*}
Formulas \eqref{strans}, \eqref{hstrans} and \eqref{ints}, \eqref{hints} provide forward and backward Fourier-Weyl and Hartley-Weyl honeycomb $S-$transforms, respectively.

\begin{example} [Interpolation tests]\label{exint}

As a specific model function, the following real-valued function is defined on the fundamental domain $F_Q$ for any point $x=x_1\om_1+x_2\om_2$ in $\om-$basis, 
$$f(x)=0.4\,e^{-\tfrac{1}{4\sigma^2}\left(\left(x_1 - \tfrac{1}{3}\right)^2 + \tfrac13\left(x_1+2x_2 - 1\right)^2\right)}.$$
The 2D graph of the model function $f$, with $\sigma=0.065$ fixed, is plotted in Figure \ref{exp}.
\begin{figure}
\includegraphics{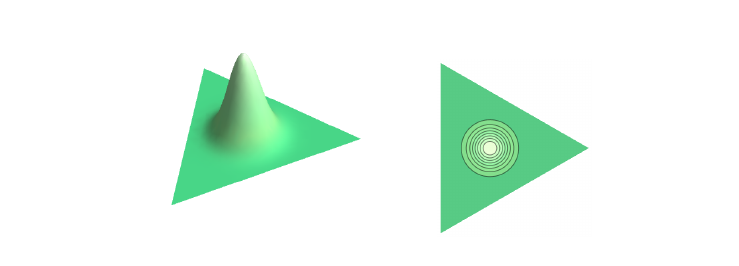}
\caption{The model function $f$ is plotted over the fundamental domain $F_Q$.}
\label{exp}
\end{figure}

The function $f$ is interpolated by the honeycomb Hartley $C-$ and $S-$functions \eqref{hintc}, \eqref{hints} of types I and II.
The interpolating functions $\mathrm{Ih}^\mathrm{I}[f]_M$ and $\mathrm{Ih}^\mathrm{II}[f]_M$, corresponding to the honeycomb Hartley $C-$functions \eqref{typeI} and \eqref{typeII}, are plotted in Figures \ref{approx1} and \ref{approx2}. The interpolating functions $\mathrm{\wt{I}h^I}[f]_M$ and $\mathrm{\wt{I}h^{II}}[f]_M$, corresponding to the honeycomb Hartley $S-$functions \eqref{typeI} and \eqref{typeII}, are plotted in Figures \ref{approx3} and \ref{approx4}. Integral error estimates of all four types of interpolations are calculated in Table \ref{errors}. 
\begin{figure}
\includegraphics{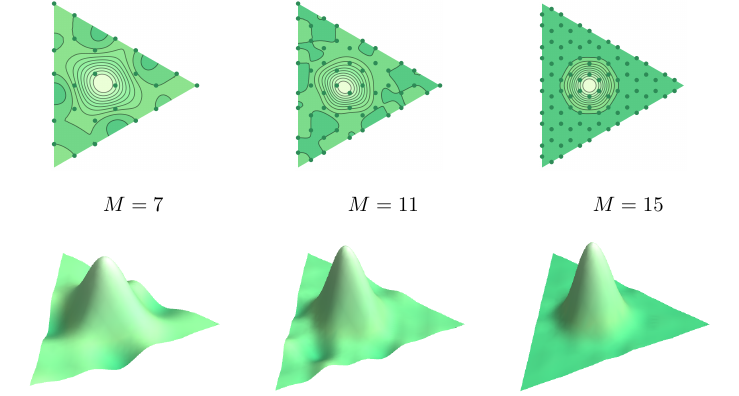}
\caption{The interpolations $\mathrm{Ih}^\mathrm{I}[f]_M$ are for $M=7,11,15$ plotted over the fundamental domain $F_Q$. Sampling points for the interpolations are depicted as green dots.}
\label{approx1}
\end{figure}

\begin{figure}
\includegraphics{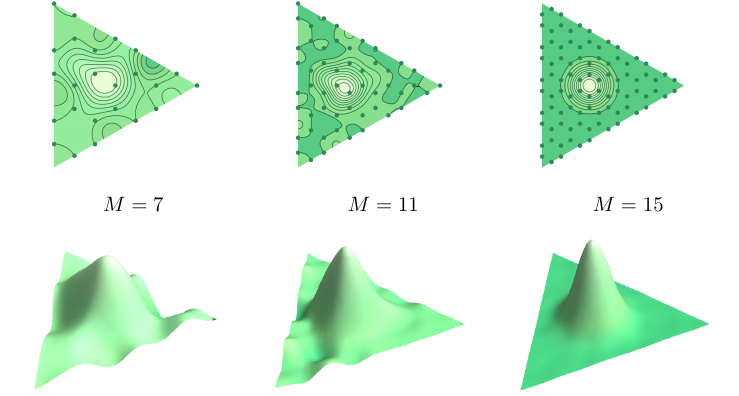}
\caption{The interpolations $\mathrm{Ih}^\mathrm{II}[f]_M$ are for $M=7,11,15$ plotted over the fundamental domain $F_Q$. Sampling points for the interpolations are depicted as green dots.}
\label{approx2}
\end{figure}

\begin{figure}
\includegraphics{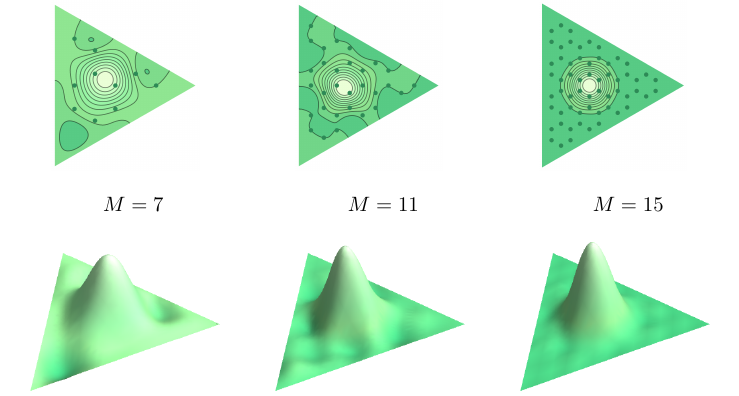}
\caption{The interpolations $\mathrm{\wt{I}h^I}[f]_M$ are for $M=7,11,15$ plotted over the fundamental domain $F_Q$. Sampling points for the interpolations are depicted as green dots.}
\label{approx3}
\end{figure}

\begin{figure}
\includegraphics{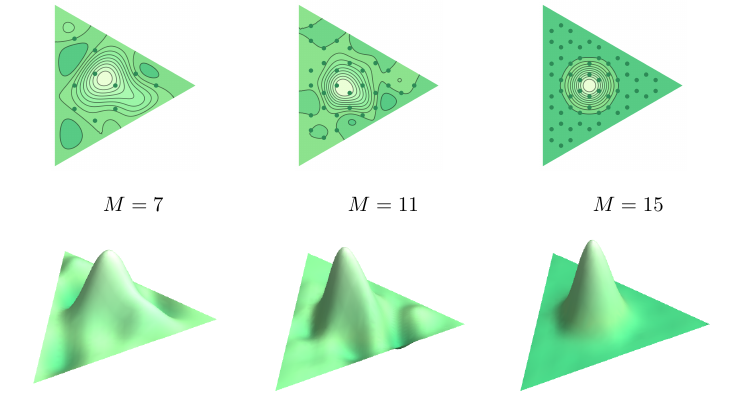}
\caption{The interpolations $\mathrm{\wt{I}h^{II}}[f]_M$ are for $M=7,11,15$ plotted over the fundamental domain $F_Q$. Sampling points for the interpolations are depicted as green dots.}
\label{approx4}
\end{figure}

\bgroup
\def\arraystretch{1.3}
\begin{table}[ht]
\begin{tabular}{|c||c|c|c|c|c|}
\hline
$M$&$7$&$9$&$11$&$13$&$15$\\ \hline\hline
$\int_{F_Q}|f-\mathrm{Ih}^\mathrm{I}[f]_M|^2$&$2108\times10^{-7}$&$2663\times 10^{-7}$&$454\times10^{-7}$&$52\times 10^{-7}$&$32\times10^{-7}$\\\hline
$\int_{F_Q}|f-\mathrm{Ih}^\mathrm{II}[f]_M|^2$&$4964\times 10^{-7}$&$2208\times 10^{-7}$&$950\times 10^{-7}$&$106\times 10^{-7}$&$8\times 10^{-7}$\\\hline
$\int_{F_Q}|f-\mathrm{\wt{I}h^I}[f]_M|^2$&$3177\times10^{-7}$&$2794\times 10^{-7}$&$462\times10^{-7}$&$53\times 10^{-7}$&$32\times10^{-7}$\\\hline
$\int_{F_Q}|f-\mathrm{\wt{I}h^{II}}[f]_M|^2$&$5666\times 10^{-7}$&$1166\times10^{-7}$&$1054\times 10^{-7}$&$112\times10^{-7}$&$11\times 10^{-7}$\\\hline
\end{tabular}
\bigskip
\caption{{\small The integral error estimates of the interpolations $\mathrm{Ih}^\mathrm{I}[f]_M$, $\mathrm{Ih}^\mathrm{II}[f]_M$, $\mathrm{\wt{I}h^I}[f]_M$ and $\mathrm{\wt{I}h^{II}}[f]_M$ are tabulated for $M=7,9,11,13,15$.}}
\label{errors}
\end{table}
\egroup
\end{example}

\subsection{Matrices of normalized discrete transforms}\

The points and weights in the sets $H_M$, $\wt H_M$ and $L_M$, $\wt L_M$ are ordered according to the lexicographic order of coordinates \eqref{kac} and \eqref{kacla}. 
For the honeycomb $C-$transforms, four matrices $\mathbb{I}^{M,\pm}$,  $\mathbb{I}\mathrm{h}^{M,\pm}$ are defined by relations
\begin{align}
\mathbb{I}^{M,\pm}_{\la s}=& \sqrt{\epsilon(s)\left(12M^2h_M(\la)\mu^\pm(\la)\right)^{-1}}\,\overline{\mathrm{Ch}^\pm_\la(s)}, \q \la \in L_M, s\in H_M,\nonumber\\
\mathbb{I}\mathrm{h}^{M,\pm}_{\la s}=&\sqrt{\epsilon(s)\left(12M^2h_M(\la)\mu^\pm(\la)\right)^{-1}}\,\overline{\mathrm{Cah}^\pm_\la(s)}, \q \la \in L_M, s\in H_M. \label{Ihls}
\end{align}
The unitary transform matrices $\mathbb{I}_M$ and $\mathbb{I}\mathrm{h}_M$, assigned to the normalized Fourier-Weyl and Hartley-Weyl honeycomb $C-$transforms, are given as the following block matrices,
\begin{align}\label{IhM}
\mathbb{I}_M=&\begin{pmatrix} \mathbb{I}^{M,+}\\ \mathbb{I}^{M,-}\end{pmatrix}, \q \mathbb{I}\mathrm{h}_M=\begin{pmatrix} \mathbb{I}\mathrm{h}^{M,+}\\ \mathbb{I}\mathrm{h}^{M,-}\end{pmatrix}.
\end{align}

For the honeycomb $S-$transforms, four matrices $\wt{\mathbb{I}}^{M,\pm}$, $\wt{\mathbb{I}}\mathrm{h}^{M,\pm}$ are defined by relations
\begin{align}
\wt{\mathbb{I}}^{M,\pm}_{\la s} &=\sqrt{\epsilon(s)\left(2M^2h_M(\la)\mu^\pm(\la)\right)^{-1}}\,\overline{\mathrm{Sh}^\pm_\la(s)}, \q \la \in \wt L_M, s\in \wt H_M,\nonumber \\
\wt{\mathbb{I}}\mathrm{h}^{M,\pm}_{\la s} &=\sqrt{\epsilon(s)\left(2M^2h_M(\la)\mu^\pm(\la)\right)^{-1}}\,\overline{\mathrm{Sah}^\pm_\la(s)}, \q \la \in \wt L_M, s\in \wt H_M. \label{Ihlss}
\end{align}
The unitary transform matrices $\wt{\mathbb{I}}_M$ and $\wt{\mathbb{I}}\mathrm{h}_M$, assigned to the normalized Fourier-Weyl and Hartley-Weyl honeycomb $S-$transforms, are given as the following block matrices,
\begin{align}\label{IhsM}
\wt{\mathbb{I}}_M=&\begin{pmatrix} \wt{\mathbb{I}}^{M,+}\\ \wt{\mathbb{I}}^{M,-}\end{pmatrix}, \q \wt{\mathbb{I}}\mathrm{h}_M=\begin{pmatrix} \wt{\mathbb{I}}\mathrm{h}^{M,+}\\ \wt{\mathbb{I}}\mathrm{h}^{M,-}\end{pmatrix}.
\end{align}

\begin{example}[Transform matrices $\mathbb{I}\mathrm{h}_4$ and $\wt{\mathbb{I}}\mathrm{h}_7$]
The lexicographically ordered point set $H_4$ is of the form 
$$H_4 = \left\{\left[0, 0, 1\right], \left[0, \tfrac{1}{4}, \tfrac{3}{4}\right], \left[0, \tfrac{3}{4}, \tfrac{1}{4}\right], \left[0, 1, 0\right], \left[\tfrac{1}{4}, \tfrac{1}{4}, \tfrac{1}{2}\right], \left[\tfrac{1}{4}, \tfrac{1}{2}, \tfrac{1}{4}\right], \left[\tfrac{1}{2}, 0, \tfrac{1}{2}\right], \left[\tfrac{1}{2}, \tfrac{1}{2}, 0\right], \left[\tfrac{3}{4}, 0, \tfrac{1}{4}\right], \left[\tfrac{3}{4}, \tfrac{1}{4}, 0\right]\right\},$$
and the lexicographically ordered weight set $L_4$ contains the following weights,
$$L_4=\{[2, 1, 1], [2, 2, 0], [3, 0, 1], [3, 1, 0], [4, 0, 0]\}.$$
The unitary transform matrix $\mathbb{I}\mathrm{h}^\mathrm{II}_4$, corresponding to the honeycomb Hartley $C-$functions~\eqref{typeII}, is computed from relations \eqref{Ihls} and \eqref{IhM} as  
{\small
$$
\mathbb{I}\mathrm{h}^\mathrm{II}_4=\begin{pmatrix}
0.433& 0.250& 0.250& 0.433& -0.354& -0.354& -0.250& -0.250& 0.250& 0.250\\
 -0.306& -0.177& 0.177& 0.306& 0.250& -0.250&0.530& -0.530& 0.177& -0.177\\
-0.421& -0.544& 0.128& 0.099& -0.344&  0.081& -0.057& 0.243& 0.358& 0.429\\
0.099&  0.128& -0.544& -0.421& 0.081& -0.344& 0.243& -0.057& 0.429&  0.358\\
0.177& 0.306& 0.306& 0.177& 0.433& 0.433& 0.306& 0.306& 
  0.306& 0.306\\
-0.433& 0.250& -0.250& 0.433& 0.354&-0.354& -0.250& 0.250& -0.250& 0.250\\
 -0.306& 0.177& 0.177& -0.306& 0.250&0.250& -0.530& -0.530& 0.177&0.177\\
 -0.099& 0.128& 0.544& -0.421& -0.081& -0.344& 0.243&0.057& -0.429& 0.358\\
0.421& -0.544& -0.128&  0.099& 0.344& 0.081& -0.057& -0.243& -0.358& 0.429\\
-0.176& 0.306& -0.306& 0.176& -0.433& 0.433& 0.306& -0.306& -0.306& 0.306
\end{pmatrix}.
$$}

The lexicographically ordered  interior point set $\wt H_7$ is of the form 
$$\wt{H}_7= \left\{\left[\tfrac{1}{7}, \tfrac{1}{7}, \tfrac{5}{7}\right], \left[\tfrac{1}{7}, \tfrac{2}{7}, \tfrac{4}{7}\right], \left[\tfrac{1}{7}, \tfrac{4}{7}, \tfrac{2}{7}\right], \left[\tfrac{1}{7}, \tfrac{5}{7}, 
   \tfrac{1}{7}\right], \left[\tfrac{2}{7},\tfrac{ 2}{7}, \tfrac{3}{7}\right], \left[\tfrac{2}{7}, \tfrac{3}{7}, \tfrac{2}{7}\right], \left[\tfrac{3}{7}, \tfrac{1}{7}, \tfrac{3}{7}\right], \left[\tfrac{3}{7}, \tfrac{3}{7},
    \tfrac{1}{7}\right], \left[\tfrac{4}{7}, \tfrac{1}{7}, \tfrac{2}{7}\right], \left[\tfrac{4}{7}, \tfrac{2}{7}, \tfrac{1}{7}\right]\right\},$$
and the lexicographically ordered weight set $\wt L_7$ contains the following weights,	
$$\wt{L}_7 = \{[3, 2, 2], [3, 3, 1], [4, 1, 2], [4, 2, 1], [5, 1, 1]\}.$$   
The unitary transform matrix  $\wt{\mathbb{I}}\mathrm{h}^\mathrm{II}_7$, corresponding to the honeycomb Hartley $S-$functions~\eqref{typeII}, is computed from relations \eqref{Ihlss} and \eqref{IhsM} as       
{\small 
$$
\wt{\mathbb{I}}\mathrm{h}^\mathrm{II}_7=\begin{pmatrix}
-0.482& -0.267& -0.267& -0.482& 0.333& 0.333& 0.119&0.119& -0.267& -0.267\\
-0.333& -0.267& 0.267& 0.333& 0.119& -0.119& 0.482& -0.482& 0.267& -0.267\\
-0.068& -0.096& 0.526& 0.372& -0.068& 0.372& -0.372&0.068& -0.458& -0.276\\
0.372& 0.526& -0.096& -0.068& 0.372& -0.068&0.068& -0.372& -0.276& -0.458\\
-0.119& -0.267& -0.267& -0.119& -0.482& -0.482& -0.333& -0.333& -0.267& -0.267\\
-0.482& 0.267& -0.267& 0.482& 0.333& -0.333& -0.119&  0.119& -0.267& 0.267\\
 0.333& -0.267& -0.267& 0.333& -0.119& -0.119&  0.482&0.482& -0.267& -0.267\\
 0.372& -0.526& -0.096& 0.068& 0.372&  0.068& -0.068& -0.372& -0.276& 0.456\\
 -0.068& 0.096& 0.526&-0.372& -0.068& -0.372& 0.372& 0.068& -0.458&0.276\\
 -0.119& 0.267& -0.267&  0.119& -0.482& 0.482& 0.333& -0.333& -0.267& 0.267
\end{pmatrix}.
$$}

\end{example}

\section{Concluding Remarks}

\begin{itemize}
\item Excellent interpolating behaviour of the Hartley honeycomb $C-$ and $S-$functions of types I and II in Example \ref{exint} promises similar success of related
digital data processing techniques. As is depicted in Table \ref{errors} of integral error estimates, the more complicated type II honeycomb functions slightly outperform the simpler functions of type I. Since  the unitary transform matrices \eqref{IhM} and \eqref{IhsM} are for any fixed $M\in \N$ directly precalculated, the complexity of the given type of function is of minor consequence. Unlike types I and II, the honeycomb functions of type III vanish on the point sets \eqref{FQM} and are suitable for interpolation of functions with the same property. Formulation of general convergence criteria, depending necessarily on the extension coefficients $\mu^{\pm,k}_\la$ of the honeycomb functions, poses an open problem.   
\item The notation used for the honeycomb orbit functions is motivated by transversal vibrational modes of the mechanical graphene model \cite{CsTi,DrSa}. In Figure \ref{honey}, let the lines linking the dots represent the springs of spring constants $\kappa$ and natural lenghts $l_0$. The dots depict the points of masses $m$ with the equilibrium distance between the two nearest points denoted by $R_0$.  The parameter $\eta=l_0/R_0$, $\eta<1$ determines stretching of the system. 
Thus, the honeycomb $C-$ and $S-$functions~\eqref{typeII} of type II represent transversal eigenvibrations of this model subjected to discretized von Neumann and Dirichlet boundary conditions on the depicted triangle, respectively. The frequencies corresponding to the modes $\mathrm{Cah}^{\pm,\mathrm{II}}_{\la}$, $\la \in L_M$ and $\mathrm{Sah}^{\pm,\mathrm{II}}_{\la}$, $\la \in \wt L_M$ are given as
\begin{equation}\label{frek}
\omega_\la^{\pm}=\sqrt{\frac{\kappa(1-\eta)}{m}\left(3\pm\frac{1}{2}\abs{\Phi_\la\left(\frac{\omega_1}{M}\right)}\right)}.	
\end{equation}
However, full exposition of the method for calculating the frequencies $\omega_\la^{\pm}$ and extension coefficients~\eqref{typeII} requires a separated article. The eigenfrequencies \eqref{frek} correspond to the frequency spectrum in \cite{Roz}. The construction of the discrete eigenfunctions in \cite{Roz} leads to two separate descriptions of the values of the modes at lattice points \eqref{HM1} and \eqref{HM2}. On the other hand, the presented subtractive approach yields uniform description of each mode by one discretized function $\mathrm{Sah}^{\pm,\mathrm{II}}_{\la}$, $\la \in \wt L_M$.
\item The set of the honeycomb $C-$ and $S-$functions, depending on the six parameters $\mu^{\pm,k}_\la$ for each $\la\in L_M$ and $\la\in \wt L_M$, comprises solutions of the three non-linear conditions \eqref{cond1} and \eqref{cond2}. Other cases of type I honeycomb functions constitute for instance
\begin{align*}
&\left(\mu^{+,0}_\la,\mu^{+,1}_\la,\mu^{+,2}_\la\right)=(0,1,0),\\
&\left(\mu^{-,0}_\la,\mu^{-,1}_\la,\mu^{-,2}_\la\right)=(1,0,-1),
\end{align*}
as well as
\begin{align*}
&\left(\mu^{+,0}_\la,\mu^{+,1}_\la,\mu^{+,2}_\la\right)=(0,0,1),\\
&\left(\mu^{-,0}_\la,\mu^{-,1}_\la,\mu^{-,2}_\la\right)=(1,-1,0).
\end{align*}
Finding a suitable equivalence relation on the set of solutions and describing the entire set of the honeycomb orbit functions up to this equivalence represents an unsolved problem. Generalization of the presented subtractive method for construction of the discretely orthogonal parametric systems of extended Weyl orbit functions to the triangular honeycomb dot with zigzag boundaries and to other crystallographic root systems also pose open problems.   
\item The families of $C-$ and $S-$functions induce two kinds of discretely orthogonal generalized Chebyshev polynomials. Cubature formulas for numerical integration are among the recently studied associated polynomial methods \cite{HMPcub,MMP,MPcub}. As a linear combination of $C-$ and $S-$ functions, each case of the honeycomb functions generates a set of polynomials discretely orthogonal on points forming a deformed honeycomb pattern inside the Steiner's hypocycloid. Properties of these polynomials and the related polynomial methods deserve further study. Existence of variants of Macdonald polynomials \cite{diejen} orthogonal on  the deformed honeycomb pattern poses another open problem. The functions symmetric with respect to even subgroups of Weyl groups generalize the standard Weyl and Hartley-Weyl orbit functions \cite{KP3,HJedis}. The reflection group $A_2$ admits one further type of these $E-$functions. The explicit form of their root-lattice discretization and the honeycomb modification deserve further study.     
\end{itemize}

\section*{Acknowledgments}
This work was supported by the Grant Agency of the Czech Technical University in Prague, grant number SGS16/239/OHK4/3T/14.  LM and JH gratefully acknowledge the support of this work by RVO14000.


\begin{thebibliography}{99}

\bibitem{Bour}
N.~Bourbaki,
{\sl Groupes et alg\`ebres de Lie,} Chapiters IV, V, VI,
Hermann, Paris, 1968.

\bibitem{Brit}
V.  Britanak, P. Yip, K. Rao, {\it Discrete cosine and sine transforms. General properties, fast algorithms and integer approximations}, Elsevier/Academic Press, Amsterdam (2007).

\bibitem{Castro}
A. H. Castro Neto, F. Guinea, N. M. R. Peres, K. S. Novoselov, A. K. Geim, {\it The electronic properties of graphene}, Rev. Mod. Phys. 81, (2009) 109--162, \doi{10.1103/RevModPhys.81.109}.

\bibitem{Cooper}
D. R. Cooper, B. D'Anjou, N. Ghattamaneni, et al., \textit{Experimental Review of Graphene}, ISRN Condens. Matter Phys. (2012) 501686, \doi{10.5402/2012/501686}.

\bibitem{CsTi}
J. Cserti, G. Tichy, \textit{A simple model for the vibrational
modes in honeycomb lattices}, Eur. J. Phys. {\bf 25} (2004) 723--736, \doi{10.1088/0143-0807/25/6/004}.

\bibitem{diejen}
J. F. van Diejen, E. Emsiz, {\it Orthogonality of Macdonald polynomials with unitary parameters}, Math. Z. {\bf 276} (2014), 517--542, \doi{10.1007/s00209-013-1211-4}.

\bibitem{DrSa}
L. B. Drissi, E. H. Saidi, M. Bousmina, \textit{Graphene, Lattice Field Theory and Symmetries}, J.~Math.~Phys. {\bf 52} (2011) 022306, \doi{10.1063/1.3546030}.

\bibitem{Falkovsky}
L. A. Falkovsky, \textit{Symmetry constraints on phonon dispersion in graphene}, Phys. Lett. A {\bf 372} (2008) 5189--5192, \doi{10.1016/j.physleta.2008.05.085}.

\bibitem{Fracastoro}
Giulia Fracastoro, S. M. Fosson, E. Magli, \textit{Steerable Discrete Cosine Transform}, IEEE Trans. Image Process. {\bf 26} (2017) 303--314, \doi{10.1109/TIP.2016.2623489}.

\bibitem{Hernandez}
J. R. Hern\'{a}ndez, M. Amado, F. P\'{e}rez-Gonz\'{a}lez, \textit{DCT-Domain Watermarking Techniques for Still Images: Detector Performance Analysis and a New Structure}, 
IEEE Trans. Image Process. {\bf 9} (2000) 55--68, \doi{10.1109/83.817598}.

\bibitem{HJedis}
J. Hrivn{\'a}k, M. Jur\'{a}nek, \textit{On $E-$Discretization of Tori of Compact Simple Lie Groups: II}, J. Math. Phys. {\bf 58}, 103504 (2017), \doi{10.1063/1.4997520}.

\bibitem{HMdis}
J. Hrivn\'ak, L. Motlochov\'a, {\it Dual-root lattice discretization of Weyl-orbit functions}, arxiv:1705.11002.

\bibitem{HMPdis}
J. Hrivn{\'a}k, L. Motlochov{\'a}, J. Patera, {\it On discretization of tori of compact simple Lie groups II.}, J. Phys. A {\bf 45} (2012) 255201, \doi{10.1088/1751-8113/45/25/255201}. 

\bibitem{HMPcub}
J. Hrivn{\'a}k, L. Motlochov{\'a}, J. Patera, \textit{Cubature formulas of multivariate	polynomials arising from symmetric orbit functions}, Symmetry {\bf 8} (2016) 63, \doi{10.3390/sym8070063}.

\bibitem{HP}
J. Hrivn\'ak, J. Patera, {\it On discretization of tori of compact simple Lie groups,}
J. Phys. A: Math. Theor.~{\bf 42} (2009) 385208, \doi{10.1088/1751-8113/42/38/385208}.

\bibitem{HW1}
J. Hrivn\'ak, M. A. Walton, \textit{Discretized Weyl-orbit functions: modified
multiplication and Galois symmetry,}  J. Phys. A: Math. Theor.~{\bf 48} (2015) 175205, \doi{10.1088/1751-8113/48/17/175205}.

\bibitem{HW2}
J. Hrivn\'ak, M. A. Walton, \textit{Weight-Lattice Discretization of Weyl-Orbit Functions,} J. Math. Phys. {\bf 57} (2016) 083512, \doi{10.1063/1.4961154}. 

\bibitem{H2}
J.~E.~Humphreys, {\it Reflection groups and Coxeter groups}, Cambridge Studies in Advanced Mathematics {\bf 29}, Cambridge University Press, Cambridge, 1990, \doi{10.1017/CBO9780511623646}.

\bibitem{Jamlos}
M. A. Jamlos, A. H. Ismail, M. F. Jamlos, A. Narbudowicz, \textit{Hybrid graphene--copper UWB array sensor for brain tumor
detection via scattering parameters in microwave detection system}, Appl. Phys. A (2017) 123:112, \doi{10.1007/s00339-016-0691-6}.

\bibitem{Lam}
E. Y. Lam, J. W. Goodman, \textit{A mathematical analysis of the DCT coefficient distributions for images}, IEEE Trans. Image Process. {\bf 9} (2000) 1661--1666, \doi{10.1109/83.869177}.

\bibitem{KP1}
A. U. Klimyk, J. Patera, {\it Orbit functions,}  SIGMA {\bf 2} (2006) 006,  \doi{10.3842/SIGMA.2006.006}.

\bibitem{KP2}
A. U. Klimyk, J. Patera, {\it Antisymmetric orbit functions,}  SIGMA {\bf 3} (2007) 023, \doi{10.3842/SIGMA.2007.023}.

\bibitem{KP3}
A. U. Klimyk, J. Patera, \textit{$E-$orbit functions,} SIGMA {\bf 4} (2008) 002. \doi{10.3842/SIGMA.2008.002}.

\bibitem{xuAd}
H. Li, Y. Xu, {\it Discrete Fourier analysis on fundamental domain and simplex of $A_d$ lattice in $d$-variables}, J. Fourier Anal. Appl. {\bf 16} (2010) 383--433, \doi{10.1007/s00041-009-9106-9}.

\bibitem{Liu}
S. Liu, C. Guo, J. T. Sheridan, \textit{A review of optical image encryption
techniques}, Opt. Laser Technol. {\bf 57} (2014) 327--342, \doi{10.1016/j.optlastec.2013.05.023}.

\bibitem{MMP}
R. V. Moody, L. Motlochov\'a, J. Patera, {\it Gaussian cubature arising from hybrid characters of simple Lie groups}, J. Fourier Anal. Appl. {\bf 20} (2014) 1257--1290, \doi{10.1007/s00041-014-9355-0}.

\bibitem{MPcub}
R. V. Moody, J. Patera, {\it Cubature formulae for orthogonal polynomials in terms of elements of finite order of compact simple Lie groups}, Adv. in Appl. Math. {\bf 47} (2011) 509--535, \doi{10.1016/j.aam.2010.11.005}.

\bibitem{Roz}
A. V. Rozhkov, F. Nori, {\it Exact wave functions for an electron on a graphene triangular quantum dot}, Phys. Rev. B {\bf 81}, (2010) 155401, \doi{10.1103/PhysRevB.81.155401} 

\bibitem{Sahoo}
R. Sahoo, R. R. Mishra, \textit{Phonon dispersion of graphene revisited}, J. Exp. Theor. Phys. {\bf 114} (2012) 805--809, \doi{10.1134/S1063776112040152}.

\bibitem{Shiva}
B. N. Shivananju, W. Yu, Y. Liu, Y. Zhang, B. Lin, S. Li, Q. Bao, \textit{The roadmap of graphene-based optical biochemical sensors}, Adv. Funct. Mater. {\bf 27} (2017) 1603918, \doi{10.1002/adfm.201603918}. 

\bibitem{strang}
G. Strang, \textit{The discrete cosine transform}, SIAM Rev. {\bf 41} (1999) 135--147, \doi{10.1137/S0036144598336745}.
\end{thebibliography}
\end{document}